\documentclass[conference]{IEEEtran}
\IEEEoverridecommandlockouts

\usepackage{amsmath, amsthm, amssymb, mathtools}
\usepackage{float,epsfig}
\usepackage{graphicx}
\usepackage{rotating}
\usepackage{subfigure}
\usepackage{caption}
\usepackage{booktabs}    
\usepackage{subcaption}
\usepackage{tkz-graph}
\usepackage{color}
\usepackage{algorithm}
\usepackage{multicol}
\usepackage[utf8]{inputenc}
\usepackage{braket}
\usepackage{subcaption, multirow}
\usepackage{qcircuit}
\usepackage[overload]{empheq}
\usepackage{hyperref}
\usepackage{xcolor}
\usepackage{color,graphicx,float,lscape,enumerate}
\usepackage{algorithm}
\usepackage{svg}

\usepackage[noend]{algpseudocode}
\usepackage{mathtools}
\DeclarePairedDelimiter\ceil{\lceil}{\rceil}

\usepackage{hyperref}

\usepackage{xparse}

\usepackage{lipsum}



\begin{document}

\newtheorem{theorem}{\bf Theorem}[section]
\newtheorem{proposition}[theorem]{\bf Proposition}
\newtheorem{definition}[theorem]{\bf Definition}
\newtheorem{corollary}[theorem]{\bf Corollary}
\newtheorem{example}[theorem]{\bf Example}
\newtheorem{exam}[theorem]{\bf Example}
\newtheorem{remark}[theorem]{\bf Remark}
\newtheorem{lemma}[theorem]{\bf Lemma}
\newcommand{\nrm}[1]{|\!|\!| {#1} |\!|\!|}

\newcommand{\calL}{{\mathcal L}}
\newcommand{\calX}{{\mathcal X}}
\newcommand{\calY}{{\mathcal Y}}
\newcommand{\calZ}{{\mathcal Z}}
\newcommand{\calW}{{\mathcal W}}
\newcommand{\calA}{{\mathcal A}}
\newcommand{\calB}{{\mathcal B}}
\newcommand{\calC}{{\mathcal C}}
\newcommand{\calK}{{\mathcal K}}
\newcommand{\C}{{\mathbb C}}
\newcommand{\Z}{{\mathbb Z}}
\newcommand{\R}{{\mathbb R}}
\renewcommand{\SS}{{\mathbb S}}
\newcommand{\LL}{{\mathbb L}}
\newcommand{\st}{{\star}}
\def\kernel{\mathop{\rm kernel}\nolimits}
\def\sigan{\mathop{\rm span}\nolimits}

\newcommand{\klasse}{{\boldsymbol \Delta}}

\newcommand{\ba}{\begin{array}}
\newcommand{\ea}{\end{array}}
\newcommand{\von}{\vskip 1ex}
\newcommand{\vone}{\vskip 2ex}
\newcommand{\vtwo}{\vskip 4ex}
\newcommand{\dm}[1]{ {\displaystyle{#1} } }

\newcommand{\be}{\begin{equation}}
\newcommand{\ee}{\end{equation}}
\newcommand{\beano}{\begin{eqnarray*}}
\newcommand{\eeano}{\end{eqnarray*}}
\newcommand{\inp}[2]{\langle {#1} ,\,{#2} \rangle}
\def\bmatrix#1{\left[ \begin{matrix} #1 \end{matrix} \right]}
\def\cmatrix#1{\left( \begin{matrix} #1 \end{matrix} \right)}
\def \noin{\noindent}
\newcommand{\evenindex}{\Pi_e}



\def \R{{\mathbb R}}
\def \C{{\mathbb C}}
\def \F{{\mathbb F}}
\def \K{{\mathbb K}}
\def \cu{\mathrm{CU}}
\def \calN{{\mathcal N}}
\def \T{{\mathbb T}}
\def \Pb{\mathrm{P}}
\def \N{{\mathbb N}}
\def \Ib{\mathrm{I}}
\def \Ls{{\Lambda}_{m-1}}
\def \Gb{\mathrm{G}}
\def \Hb{\mathrm{H}}
\def \Lam{{\Lambda}}

\def \Qb{\mathrm{Q}}
\def \Rb{\mathrm{R}}
\def \Mb{\mathrm{M}}
\def \norm{\nrm{\cdot}\equiv \nrm{\cdot}}

\def \A{{{\mathbb P}_1(\C^{n\times n})}}
\def \L{{\mathbb L}}
\def \G{{\F_{\tt{H}}}}
\def \S{\mathbb{S}}
\def \s{\mathbb{s}}
\def \sigmin{\sigma_{\min}}
\def \elam{\Lambda_{\epsilon}}
\def \slam{\Lambda^{\S}_{\epsilon}}
\def \Ib{\mathrm{I}}
\def \Tb{\mathrm{T}}
\def \d{{\delta}}

\def \Lb{\mathrm{L}}
\def \N{{\mathbb N}}
\def \Ls{{\Lambda}_{m-1}}
\def \Gb{\mathrm{G}}
\def \Hb{\mathrm{H}}
\def \Delta{\triangle}
\def \Rar{\Rightarrow}
\def \p{{\mathsf{p}(\lam; v)}}

\def \D{{\mathbb D}}

\def \tr{\mathrm{Tr}}
\def \cond{\mathrm{cond}}
\def \lam{\lambda}
\def \sig{\sigma}
\def \sign{\mathrm{sign}}

\def \ep{\epsilon}
\def \diag{\mathrm{diag}}
\def \rev{\mathrm{rev}}
\def \vec{\mathrm{vec}}

\def \ham{\mathsf{Ham}}
\def \herm{\mathsf{Herm}}
\def \sym{\mathsf{sym}}
\def \odd{\mathsf{sym}}
\def \en{\mathrm{even}}
\def \rank{\mathrm{rank}}
\def \pf{{\bf Proof: }}
\def \dist{\mathrm{dist}}
\def \rar{\rightarrow}

\def \rank{\mathrm{rank}}
\def \pf{{\bf Proof: }}
\def \dist{\mathrm{dist}}
\def \Re{\mathsf{Re}}
\def \Im{\mathsf{Im}}
\def \re{\mathsf{re}}
\def \im{\mathsf{im}}

\def \sym{\mathsf{CSym}}
\def \sksym{\mathsf{skew\mbox{-}sym}}
\def \odd{\mathrm{odd}}
\def \even{\mathrm{even}}
\def \herm{\mathsf{Herm}}
\def \skherm{\mathsf{skew\mbox{-}Herm}}
\def \str{\mathrm{ Struct}}
\def \cnot{\mathrm{CNOT}}
\def \eproof{$\blacksquare$}

\def \U{\mathsf{U}}
\def \G{\mathsf{G}}
\def \bS{{\bf S}}
\def \cA{{\cal A}}
\def \E{{\mathcal E}}
\def \X{{\mathcal X}}
\def \cH{\mathcal{H}}
\def \cJ{\mathcal{J}}
\def \tr{\mathrm{Tr}}
\def \range{\mathrm{Range}}
\def \adj{\star}

\def \pal{\mathrm{palindromic}}
\def \palpen{\mathrm{palindromic~~ pencil}}
\def \palpoly{\mathrm{palindromic~~ polynomial}}
\def \odd{\mathrm{odd}}
\def \even{\mathrm{even}}
\def \QT{{\texttt{QT}}}

\newcommand{\tm}[1]{\textcolor{magenta}{ #1}}
\newcommand{\tre}[1]{\textcolor{red}{ #1}}
\newcommand{\tb}[1]{\textcolor{blue}{ #1}}
\newcommand{\tg}[1]{\textcolor{green}{ #1}}

\title{Quantum embedding of graphs for subgraph counting}

\author{\IEEEauthorblockN{Bibhas Adhikari\thanks{Email: badhikari@fujitsu.com}}
\IEEEauthorblockA{\textit{Fujitsu Research of America Inc.} \\
Santa Clara, CA, USA }
}

\maketitle

\begin{abstract}
We develop a unified quantum framework for subgraph counting in graphs. We encode a graph on $N$ vertices into a quantum state on $2\lceil \log_2 N \rceil$ working qubits and $2$ ancilla qubits using its adjacency list, with worst-case gate complexity $O(N^2)$, which we refer to as the graph adjacency state. We design quantum measurement operators that capture the edge structure of a target subgraph, enabling estimation of its count via measurements on the $m$-fold tensor product of the adjacency state, where $m$ is the number of edges in the subgraph. We illustrate the framework for triangles, cycles, and cliques. This approach yields quantum logspace algorithms for motif counting, with no known classical counterpart.
\end{abstract}

\begin{IEEEkeywords}
Embedding, quantum circuit, subgraph counts, motif 
\end{IEEEkeywords}

\section{Introduction} 

Many real-world optimization problems can be formulated as combinatorial problems on graphs, where the graph encodes underlying data. To exploit quantum algorithms for such problems, classical data must be embedded into quantum states. For a survey of quantum data encoding techniques, including graph-structured data, see \cite{shayeganfar2025quantum}, and for the role of graphs in quantum computing, see \cite{romanello2025role}. Quantum graph embedding represents graph structures as quantum states, enabling the use of superposition and entanglement for computational tasks such as machine learning.  Existing approaches include variational quantum circuits, quantum annealing, and quantum-inspired methods.

Given a weighted (directed) graph $G^w=(V, E)$ on $N$ vertices with an weight function $w:V\times V\rightarrow \R_{\geq 0},$ we associate a quantum state defined as follows. Denoting the weight of an edge $(r,c)\in E$ as $w_{rc}$ we define a $2\ceil{\log_2N}$-qubit quantum state  
\begin{eqnarray}
    \ket{G^w} &=& \frac{1}{W} \sum_{r,c\in\{0,1\}^n} \bra{r}A(G^w)\ket{c} \,\, \ket{r}\ket{c} \nonumber\\ &=& \frac{1}{W} \sum_{r,c\in\{0,1\}^n} w_{rc} \,\, \ket{r}\ket{c},\label{eqn:adjws}
\end{eqnarray} where $W=\sqrt{\sum_{(r,c)\in E} w_{rc}^2}$ is the normalizing constant. Here, the adjacency matrix $A(G^w)$ is defined by 
\begin{equation}
    A(G^w)_{r,c} = \begin{cases}
        w_{rc}, \,\, \mbox{if} \,\, (r,c)\in E \\
        0, \,\, \mbox{otherwise.}
    \end{cases}
\end{equation} The notation $r$ and $c$ refer to the row and column indices of the adjacency matrix, respectively. In particular, $W=\sqrt{2\, |E|}$ for an unweighted simple undirected graph $G$ and $W=\sqrt{ |E|}$ for unweighted directed graph, where $|E|$ is the number of edges in the graph. 

We call the state given by equation (\ref{eqn:adjws}) as the \textit{graph adjacency state} since it encodes the adjacency relations of a graph. In this paper, we restrict our attention only to simple unweighted graphs. Essentially, we encode the adjacency matrix into a multi-qubit quantum state, where the number of qubits is logarithmic number of vertices. Thus it is different from block-encoding technique, which is used to encode matrices through a quantum circuit, for example see \cite{yosef2025encoding} and the references therein. A doubt is perpetuated in the literature about achieving advantages in combinatorial optimization by utilizing only $\log_2N$ qubits for $N$ data points. In this paper, we establish that multiple copies of such state can extend some potential avenues for gaining advantage in learning structural properties of a graph.

We propose quantum circuit constructions for preparing a graph adjacency state from the adjacency list of an unweighted  graph. The construction uses two intermediate states, the \emph{vertex neighborhood state} and the \emph{degree distribution state}, each implemented on $\log_2 N$ qubits with one ancilla, and achieves $O(N^2)$ worst-case gate complexity, however it depends on graph sparsity and vertex labeling. Building on this representation $\ket{G}$ for a graph $G$, we estimate subgraph frequencies by encoding the edge structure of a subgraph $S$ with $m$ edges into a projection operator $P_S$ with the aid of $m$ copies of $\ket{G},$ where $m$ is the number of edges in $S.$  

To validate the framework via numerical simulations, we consider Erd\H{o}s–R\'enyi graphs with varying edge probabilities and graph sizes $N$, considering triangles, 4-cycles, and 4-cliques. The estimator is implemented by sampling from Bernoulli trials with success probability $p=\bra{G}^{\otimes m}P_S\ket{G}^{\otimes m}$, which is computed exactly using combinatorial formulas (e.g., triangle, cycle, and clique counts), thereby faithfully reproducing the statistics of repeated quantum measurements without explicitly simulating the exponentially large state vector.

We show that the POVM-based method requires $m(2\lceil \log_2 N\rceil + 2)$ qubits per shot and achieves additive error $\epsilon$ with sample complexity $\Theta\!\left(\frac{1}{\epsilon^2}\log\frac{1}{\delta}\right)$. In contrast, amplitude amplification-based estimation reduces the query complexity to $O\!\left(\frac{1}{\epsilon}\log\frac{1}{\delta}\right)$ while maintaining the same asymptotic qubit requirement, at the cost of coherent circuits. For constant $m$, both approaches operate in quantum logarithmic space, and amplitude amplification provides a quadratic improvement in precision dependence. We emphasize that the proposed framework improves space complexities of existing classical and quantum algorithms (see Section \ref{Sec:somputecomplex}) and it is a unified framework for directed and undirected graphs. 

Note that, given $\epsilon, \delta, m$ the potential of the proposed framework lies in the fact that there is no classical logspace algorithms for motif counting in the literature. Recall that given a problem with input length $N,$ a logspace algorithm uses work space $O(\log N)$ and time $poly(N)$ which is equivalent to the number of local updates. The quantum logspace algorithms is the class of unitary quantum algorithms, which include storing information using qubits, whose local updates are unitary matrices, and the outcome of measuring some qubits at the end of the computation, see \cite{watrous1999space} \cite{fefferman2016complete}\cite{girish2024quantum} for further details.  

The remainder of the paper is organized as follows. Section \ref{sec:2} reviews some of the existing quantum state representations of a graph in the literature along with a review of existing quantum circuit proposals for preparing (sparse) quantum state. Section \ref{sec:3} includes the quantum circuit preparation of the adjacency graph state for undirected graph. Section \ref{sec:4} includes the measurement operators for $k$-clique, $k$-cycle counts and the sample complexity of the proposed methods along with numerical simulation results. 

\section{Preliminaries}\label{sec:2}
In this section, we briefly review some of the existing methods for quantum state representation of graphs and quantum circuit constructions for preparing sparse quantum states, along with a brief review of space computational complexities of existing classical and quantum methods for subgraph counting.

\subsection{Quantum state representation of graphs}

There are a few proposals for the representation of the quantum state of a given graph such as graph states \cite{briegel2001persistent} \cite{raussendorf2001one}, the quantum states defined by combinatorial Laplacian corresponding to simple graphs \cite{braunstein2006laplacian} \cite{de2016spectral} and complex-weighted directed graphs corresponding to Laplacian and signless Laplacians \cite{adhikari2017laplacian}, termed as the \textit{graph Laplacian states} \cite{dutta2019condition}. In another direction, a sequence of articles is written from a different perspective to analyze experimental data through graph structures, see \cite{gu2019quantum} and the references therein.  Different quantum machine learning approaches are proposed to deal with structured quantum data in graphs, such as associating vertices of a graph with density matrices, where the edges represent information-theoretic correlations of the neighboring vertices \cite{beer2023quantum}. Quantum analogues of classical neurons are developed to design quantum feedforward neural networks that are capable of universal quantum computation \cite{beer2020training}.

In \cite{ionicioiu2012encoding}, the authors propose an axiomatic approach to encode a graph into a quantum state by associating each vertex with a quantum state in some Hilbert space. A quantum state corresponding to a graph is defined in \cite{ai2024discrete} from the perspective of discrete-time quantum walks, here the probability amplitudes encodes the transition probabilities of the walker, see also \cite{della2024quantum}.  A framework, known as \textit{equivariant quantum graph circuits} for learning functions over graphs using parameterized
quantum circuits is proposed in \cite{mernyei2022equivariant}.

\subsection{Quantum circuits for preparing sparse quantum states}

There are several algorithms and quantum circuit designs proposed in literature for generation of a desired quantum state, particularly for sparse quantum states, see \cite{luo2024circuit} and the references therein. An $n$-qubit $d$-sparse quantum state refers to an $n$-qubit state with $d$ nonzero probability amplitudes, there are established protocols to produce such a given state as an output of a quantum circuit $U$ with or without the use of $m$ ancillaery qubits such that $$U\ket{0}^{\otimes n}\ket{0}^{\otimes m}=\sum_{l=0}^{d-1} \alpha_l\ket{i_l}\ket{0}^{\otimes m},$$ where $\ket{i_l}$ is a canonical basis element of $\C^{2^n}.$  The proposal in  \cite{gleinig2021efficient} gives a desired quantum circuit with $O(dn)$ CNOT gates and $O(d\log d+n)$ one-qubit gates without use of any ancillary gates.   Then a sequence of proposals are given to reduce the circuit complexity with or without the use of up to 2 ancillary qubits, see \cite{malvetti2021quantum} \cite{ramacciotti2024simple} \cite{mozafari2022efficient} \cite{park2019circuit} \cite{de2022double}. A significant improvement is made in \cite{mao2024towards}, in which the authors propose a sparse state generation algorithm that generates a circuit of size (total number of gates) $O\left(\frac{dn}{\log n}+n\right)$ with 2 ancillary qubits. In addition to that they also show that when $d\leq 2^{\delta n},$ $0\leq \delta <1$ then this circuit size is asymptotically optimal with the use of at most $poly(n)$ ancillary qubits. Focusing on circuit depth along with circuit size, quantum circuit with $O(dn\log n)$ ancillary qubits is proposed in \cite{zhang2022quantum} with depth complexity $O(\log dn).$ Besides, for any $m\geq 0,$ \cite{sun2023asymptotically} proposed an algorithm with depth complexity $O\left(n\log dn + \frac{dn^2\log d}{n+m}\right)$ using $m$ ancillary qubits. We also refer to \cite{luo2024circuit} for recent development for quantum circuit complexity of preparing sparse quantum states. 

\subsection{Subgraph counting}\label{Sec:somputecomplex}

Subgraph counting is a fundamental task in various applications in graph structured data analysis \cite{ribeiro2021survey}. Indeed, this is one of those hard computational problems. In particular, determining if one subgraph exists at all in another larger network is an NP-Complete problem \cite{cook2023complexity}. There are several classical algorithms proposed in the literature to estimate the subgraph frequencies, in particular for subgraphs that are triangles, cycles, cliques or \textit{motif}s. Graph motifs are small subgraph patterns that appear in a network significantly more often than expected in suitably randomized null models, and they are widely regarded as the fundamental building blocks of complex networks. Figure \ref{fig:motifs} exhibits a few examples of motifs in directed and undirected graphs. Introduced systematically by Milo et al., motifs capture mesoscopic structural information that is invisible to global statistics such as degree distributions or clustering coefficients \cite{Milo2002}. Different domains exhibit characteristic motif profiles: feed-forward loops in transcriptional regulatory networks, triangles in social networks, and specific cycle structures in technological and power-grid networks \cite{Alon2007} \cite{Newman2010}. Motif analysis has been used to infer functional organization, robustness, and information-processing capabilities of networks, and has found applications in biology, neuroscience, finance, and infrastructure systems \cite{Sporns2004} \cite{Battiston2016}. From a computational perspective, counting motifs such as triangles, $k$-cycles, and $k$-cliques poses significant algorithmic challenges and has motivated the development of scalable, exact, approximate, and streaming algorithms \cite{Latapy2008} \cite{Bressan2019}.

\begin{figure}[htbp]
     \centering
     \includegraphics[width=0.5\textwidth]{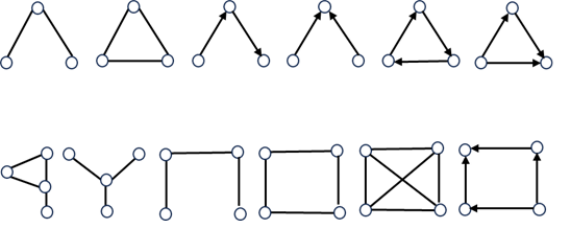}
\centering \caption {A few examples of motifs in (directed) graphs.}
    \label{fig:motifs}
\end{figure} 

The space complexity of these algorithms depend on several parameters associated with a graph including the number of edges and the number of vertices, see the recent survey \cite{yin2025motif}. To be specific, most of the in memory $k$-clique algorithms the space complexity is $O(|E|+N).$ Besides, there are several excellent approaches including approximation algorithms, heterogeneous and parallel algorithms, sample-based algorithms etc.  A naive brute-force approach for $k$-clique detection
has a worst-case computational complexity of $O(N^kk^2),$ which is reduced by using optimizers up to $O(N^c)$ in space, where $c$ is independent of $k$ in \cite{vassilevska2009efficient}. On the other hand, a few quantum approaches are proposed to estimate $k$-clique counts in an undireced graph, especially for triangles. In \cite{perriello2025quantum}, the authors nicely describe the quantum algorithms in literature for $k$-clique finding in undirected graphs such as \cite{metwalli2020finding} which requires $O\left(N\log_2k^2\right)$ qubits, \cite{haverly2021implementation} which requires $O(N^2)$ qubits, the best known result achieves query complexity $O(N^{5/4})$ for triangle counts in \cite{le2014improved}. Recently, \cite{perriello2025quantum} proposes a Quantum Amplitude Amplification-based and a quantum walk-based methods using a state which is closely related to the adjacency quantum state for $k$-clique counting in an undirected graph with $O(N\log_2N)$ qubits. 

\section{Quantum circuits for preparing graph adjacency state}\label{sec:3}

In this section, we develop quantum circuit models to prepare the $2n$-qubit quantum state \begin{equation}
    \ket{G}=\frac{1}{\sqrt{2|E|}} \sum_{r,c} a_{rc} \ket{r}\ket{c}
\end{equation} corresponding to a graph $G$ on $N=2^n$ vertices, $|E|$ denotes the number of edges of $G,$ and $a_{rc}$ denotes the $(r,s)$-entry of the adjacency matrix associated with $G.$

First note that \begin{eqnarray*}
   \ket{G} &=& \frac{1}{\sqrt{2|E|}}\sum_{\substack{r,c=0\\ r\neq c}}^{N-1} a_{rc} \ket{r} \ket{c}   \\
   &=& \sum_{r=0}^{N-1} \sqrt{\frac{d_r}{2|E|}}\ket{r}\left(\sum_{c\in \mathcal{N}_r} \frac{1}{\sqrt{d_r}} a_{rc}\ket{c}\right), \nonumber
\end{eqnarray*} where $\mathcal{N}_r$ denotes the set of neighbors of $c$ i.e. $\mathcal{N}_r=\{c\in V : a_{rc}\neq 0\}.$ To prepare the state $\ket{G},$ it is enough to implement two unitary transformations that provide \begin{eqnarray}
\mathsf{U}_D &:& \, \ket{0}^{\otimes n}\mapsto \sum_{r=0}^{N-1} \sqrt{\frac{d_r}{2|E|}}\ket{r} \,\, \mbox{and}  \nonumber \\ \U^{(r)} &:& \, \ket{0}^{\otimes n}\mapsto \ket{\calN_r}=\sum_{c\in \mathcal{N}_r} \frac{1}{\sqrt{d_r}} \ket{c},\label{eqn:nstate}\end{eqnarray} where $d_r=|\calN_r|$ is the degree of the vertex $r.$ The fundamental difference between these two output states is that the later has equal probability amplitudes, whereas the former has the same property if and only if the graph is regular. The former state encodes the degrees of the vertices on a graph with logarithmic overhead, and hence we call it the \textit{degree distribution state} for a given graph.  Besides, the later quantum state defines the state which encodes the neighbors of a vertex $r,$ and hence we call it the \textit{vertex neighborhood quantum state} for a given vertex.  

We propose quantum circuit models for implementing $\U_D$ and $\U^{(r)}$, which leads to the construction of the graph adjacency state $\ket{G}$ using two quantum registers $\mathsf{QR}_1$ and $\mathsf{QR}_2$ respectively, each is composed of $\lceil\log_2N\rceil +1$ qubits, in which one qubit act as an ancilla qubit. The implementation of these circuits are given in Algorithm \ref{algo:U_sp} and \ref{algo:U_r} for $\U_D$ and $\U^{(r)},$ respectively. Now having these circuit models, we discuss the following procedure for preparing $\ket{G}.$  The quantum circuit presented in Figure \ref{fig:qc_G} implement the transformation $\ket{0}^{\otimes\lceil\log_2N\rceil}\ket{0}^{\otimes\lceil\log_2N\rceil}$ $\mapsto$ $\ket{G}$ with the help of $2$ ancillary qubits with intial state $\ket{0}.$ The ancillary qubits are represented with red wires, whereas the bold black colored wires denote the working qubit registers. Although the circuit is self-explanatory, we describe the implementing the circuit in Algorithm \ref{algo:U_DU_r}. 

\begin{figure}[htbp]
     \centering
     \includegraphics[width=1.0\linewidth]{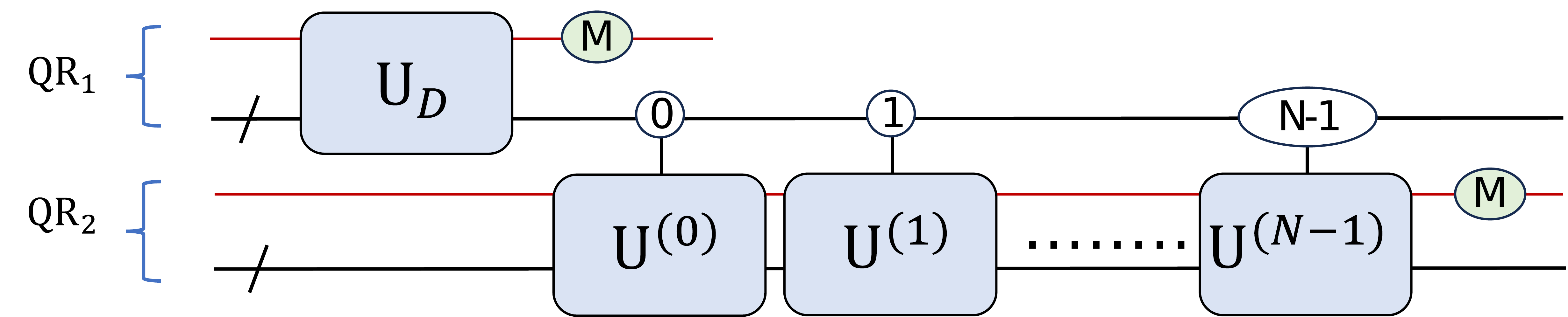}
\centering \caption { The quantum circuit for preparing $\ket{G}$.}
    \label{fig:qc_G}
\end{figure}

\begin{algorithm}
\caption{Preparing the adjacency state $\ket{G}$}\label{algo:U_DU_r}
\textbf{Input:} $\ket{0}^{\otimes\lceil\log_2N\rceil}$ for working qubits in register $\mathsf{QR}_1,$ $\mathsf{QR}_2$; $\ket{0}$ for ancillary qubits, the  quantum circuits for $\U_D$ and $\U^{(r)}$ for each vertex $r$ (Algorithms \ref{algo:U_sp} and \ref{algo:U_r}).\\
\textbf{Output:} $\ket{G}$ on the working qubits register
\begin{algorithmic}
\State Implement $\U_D$ by measuring the ancilla qubit wrt Pauli $Z$ operator in $\mathsf{QR}_1$ upon receiving $\ket{1}$ 
\For{$r=0,1,\hdots,N-1$ }
\State Implement the gate $\U^{(r)}$ in $\mathsf{QR}_2$ controlled by working qubits in $\mathsf{QR}_1$ with state $\ket{r}$ through the binary representation of $r$ 
\State Perform a Pauli $Z$ measurement on the ancilla qubit in $\mathsf{QR}_2$
\EndFor
\end{algorithmic}
\end{algorithm}

\begin{theorem}
 The quantum circuit in Algorithm \ref{algo:U_DU_r} implements a unitary transformation $\ket{0}^{\otimes 2\lceil\log_2N\rceil}$ $\mapsto$ $\ket{G}$ in the working qubit register formed by $\mathsf{QR}_1$ and $\mathsf{QR}_2$ with the aid of $2$ ancillary qubits with probability one. The circuit can be implemented using $N+2|E|$ number of $(\lceil\log_2N\rceil+1)$-qubit Toffoli gates, and $2\sum_{r=1}^{N-1} H(0,r) + 2\sum_{r=0}^{N-1}\sum_{c\in\calN_r} H(0,c) -4|E|$ CNOT gates, where $H(x,y)$ denotes the Hamming distance between binary representation of two integers $x,y\in\{0,\hdots,N-1\}$.   
\end{theorem}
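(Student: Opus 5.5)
The proof has two parts: correctness (the output is $\ket{G}$ with certainty) and gate counting. Both reduce to the guarantees of Algorithms \ref{algo:U_sp} and \ref{algo:U_r}. I will use these in the form the statement needs. On $\lceil\log_2N\rceil$ working qubits plus one ancilla, $\U_D$ maps $\ket{0}\ket{0}$ to a state whose ancilla-$\ket{1}$ branch is exactly $\sum_r\sqrt{d_r/2|E|}\,\ket{r}\ket{1}$. Likewise $\U^{(r)}$ produces $\ket{\calN_r}$ on the ancilla-$\ket{1}$ branch. Both constructions are built so that the ancilla outcome is deterministic: the ancilla flip is applied coherently to every basis label that should carry amplitude.

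\textbf{Correctness.} First apply $\U_D$ on $\mathsf{QR}_1$. Measuring its ancilla in $Z$ gives $\ket{1}$ with probability one and leaves $\sum_r\sqrt{d_r/2|E|}\ket{r}$ on the working qubits. Next comes the loop. For each $r$, $\U^{(r)}$ acts on $\mathsf{QR}_2$ controlled on $\mathsf{QR}_1=\ket{r}$; the control is realised by conjugating with $X$ gates on the bits where $r$ has a $0$. Only the branch $\ket{r}$ is affected. Because the labels $\ket{r}$ are orthogonal, after all $N$ iterations the state is
\[\sum_r\sqrt{d_r/2|E|}\,\ket{r}\ket{\calN_r}=\frac{1}{\sqrt{2|E|}}\sum_{r,c}a_{rc}\ket{r}\ket{c}=\ket{G},\]
by the decomposition displayed above. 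I will also check that the ancilla of $\mathsf{QR}_2$ is returned to a fixed, deterministic value on every branch, so the $Z$-measurement is trivial (probability one) and does not collapse the superposition over $r$. The overall map on the working registers is therefore well defined as a unitary extension sending $\ket{0}^{\otimes 2\lceil\log_2N\rceil}$ to $\ket{G}$.

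\textbf{Gate count.} Here I add up the costs of the components.
\begin{itemize}
\item $\U_D$: the ancilla is flipped once per vertex by an $n$-controlled Toffoli, giving $N$ Toffolis.
\item $\U^{(r)}$: there is one Toffoli per neighbour $c\in\calN_r$, giving $\sum_r d_r=2|E|$ Toffolis in total.
\end{itemize}
This yields $N+2|E|$ Toffolis. For the CNOTs, I expect each basis label $x$ to be reached from $\ket{0\cdots0}$ by CNOT/X patterns whose count is the Hamming weight $H(0,x)$. A pair of such patterns is needed to compute and then uncompute the label, which gives the factor $2$. Summing over vertices $r\ge1$ in $\U_D$ gives $2\sum_{r\ge1}H(0,r)$; the term $r=0$ vanishes. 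Summing over neighbours gives $2\sum_r\sum_{c\in\calN_r}H(0,c)$.

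The main obstacle is the correction $-4|E|$. My first attempt will be to identify a saving of two gates per adjacency entry $(r,c)$, i.e.\ $2\cdot 2|E|$ in total. The likely source is the step preparing the uniform-amplitude neighbourhood state: one CNOT of each compute/uncompute pair is absorbed into the amplitude-splitting (controlled-rotation) step, or merged across consecutive neighbours when they are processed in sequence. I will pin this down by reading the exact gate sequence in Algorithm \ref{algo:U_r}. If the bookkeeping there shows a different per-edge saving, the constant in the statement must be matched to that sequence.
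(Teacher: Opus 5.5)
Your plan is the paper's proof. Correctness comes from composing $\U_D$ with the $\mathsf{QR}_1$-controlled $\U^{(r)}$, with both ancillas ending in $\ket{1}$ on every branch. This needs the $\mathsf{QR}_2$ measurement to be made once after the loop, not inside it as Algorithm~\ref{algo:U_DU_r} literally lists. The gate count is the sum of the counts in Theorem~\ref{thm:UD} and Theorem~\ref{thm:Ur}. The $-4|E|$ you left open comes from exactly the mechanism you guessed. In Algorithm~\ref{algo:U_r} the controlled $U_{\sqrt{d_r-j}}$ writes the lowest set bit $k$ of $c_j$, so only $H(0,c_j)-1$ CNOTs are needed on each side of the Toffoli. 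That gives $2\sum_{c\in\calN_r}H(0,c)-2d_r$ per vertex (Theorem~\ref{thm:Ur}) and $-\sum_r 2d_r=-4|E|$ overall.

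The step that does not survive your own bookkeeping is the $\U_D$ term. Algorithm~\ref{algo:U_sp} has the identical structure: $G_{\alpha_j}$ acts on the lowest set bit $k$ of $j$, and CNOTs are applied only for $l\neq k$. So the same absorption happens there, and its gate sequence gives $2\sum_{r\ge 1}(H(0,r)-1)$, not the $2\sum_{r\ge 1}H(0,r)$ you expect. The paper reaches the stated figure only by quoting Theorem~\ref{thm:UD} as written, and that theorem's CNOT count is inconsistent with its own algorithm.

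There is also a smaller mismatch in the neighbourhood part. A neighbour $c=0$ is handled by the Toffoli alone, so it saves nothing, whereas the formula charges it $2H(0,0)-2=-2$. Deriving everything from the circuits, as you propose, therefore gives $2(N-1-d_0)$ fewer CNOTs than stated, assuming no isolated vertices. The theorem's CNOT count is then valid only as an upper bound. You should either say so, or cite Theorem~\ref{thm:UD} and Theorem~\ref{thm:Ur} verbatim as the paper does.

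Finally, $X$-conjugation only fixes the control value. Making $\U^{(r)}$ controlled on $\mathsf{QR}_1$ turns its Toffolis into $(2\lceil\log_2N\rceil+1)$-qubit gates. Neither your count nor the paper's reflects this overhead.
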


\pf The entire implementation of the circuit is based on the quantum registers $\mathsf{QR}_1$ and $\mathsf{QR}_2$ incorporating the gates corresponding to $\U_D$, and $\U^{(r)}$ for each $r\in\{0,1,\hdots,N-1\}$. First, performing the measurement on the ancilla qubit in $\mathsf{QR}_1$ implements $\U_D$ upon receiving the measurement outcome $\ket{1}$, which implements 
\begin{eqnarray*}
&& \ket{0}_1\ket{0}_2\ket{0}^{\lceil\log_2N\rceil}\ket{0}^{\lceil\log_2N\rceil} \mapsto \\ && \ket{1}_1\ket{0}_2\left(\sum_{r=0}^{N-1} \sqrt{\frac{d_r}{2|E|}}\ket{r}\ket{0}^{\lceil\log_2N\rceil}\right),
\end{eqnarray*} where $\ket{a}_j$ denotes the state of the ancilla qubit in $\mathsf{QR}_j,$ $a_j\in\{0,1\},$ $j=1,2.$ Then applying the sequence of controlled $\U^{(r)}$ gates on $\mathsf{QR}_2$ the desired state $\ket{G}$ is obtained in the working qubits through the transformation 
\begin{eqnarray*}
&& \ket{1}_1\ket{0}_2\left(\sum_{r=0}^{N-1} \sqrt{\frac{d_r}{2|E|}}\ket{r}\right)\ket{0}^{\lceil\log_2N\rceil}\mapsto \\ && \ket{1}_1\ket{1}_2\sum_{r=0}^{N-1} \sqrt{\frac{d_r}{2|E|}}\ket{r}\left(\sum_{c\in\calN_r} \frac{1}{\sqrt{d_r}}a_{rc}\ket{c}\right).    
\end{eqnarray*} The Pauli $Z$ measurements on both the ancilla qubits prepares $\ket{G}$ with probability one.

The gate counts follow from Theorem \ref{thm:UD} and Theorem \ref{thm:Ur}. This completes the proof. \hfill{$\square$}

We denote a controlled single-qubit unitary gate $U$ as $CU_{\{x,y\}}$ corresponding to the control qubit $x$ and target qubit $y.$ We also denote $T^{(n+1)}_{\{c,x\}}$ as the Toffoli gate on an $(n+1)$-qubit register with the control $n$ qubit state is given by $\ket{c}=\ket{c_{n-1}\cdots c_1c_0}$ corresponding to the $n$-bit representation of a positive integer $c=(c_{n-1},\hdots,c_1,c_0)\in\{0,1\}^n,$ where as $x$ denotes the target qubit.

\subsection{Preparing vertex neighborhood quantum state}

Given a vertex $r\in V$ of a graph on $N$ vertices and its adjacency list or the set of neighbors of $\calN_r\subset V,$ we design a quantum circuit on the $\lceil\log_2 N\rceil$-qubit register for preparing the quantum state $\ket{\calN_r}$ as defined in equation (\ref{eqn:nstate}). For brevity we assume $N=2^n$.

Let $\calN_r=\{c_0,\hdots,c_{d_r-1}\}$ with $c_{d_r-1}\leq \cdots\leq c_1\leq c_0.$ Suppose $c_j=(c_{n-1}^{(j)},\cdots, c_{1}^{(j)}, c_{0}^{(j)})$ is the $n$-bit representation of $c_j,$ $0\leq j\leq n-1$ such that $c_j=\sum_{l=0}^{n-1} c_{l}^{(j)} 2^l.$ For $d_r\geq 1,$ we define a collection of unitary gates $U_{\sqrt{d_r-j}}\in \mathfrak{U}(2)$ given by \begin{equation}
    U_{\sqrt{d_r-j}}=\bmatrix{\sqrt{\frac{d_r-j-1}{d_r-j}} & - \sqrt{\frac{1}{d_r-j}}  \\  \sqrt{\frac{1}{d_r-j}}  & \sqrt{\frac{d_r-j-1}{d_r-j}}},
\end{equation} for $j\in\{0,1,\hdots,d_r-1\}.$ In particular, note that $CU_{\sqrt{d_r-j}}=\bmatrix{\frac{1}{\sqrt{2}} & -\frac{1}{\sqrt{2}} \\ \frac{1}{\sqrt{2}} & \frac{1}{\sqrt{2}}}=H,$ the Hadamard gate for $j=d_r-2$ and $CU_{\sqrt{d_r-j}}=\bmatrix{0&-1\\1&0}=-\iota Y$ for $j=d_r-1.$ Then we provide Algorithm \ref{algo:U_r} for the construction of the circuit representation of the unitary operator $\U^{(r)}$ on a $(\log_2N+1)$-qubit register $\mathsf{QR}_1$ with one ancillary qubit $\ket{a},$  the first qubit in the register, whose measurement with respect to Pauli $Z$ operator implements $\ket{0}^{\otimes n}\mapsto \ket{\calN_r}$ upon receiving the outcome $\ket{1}$ for any vertex $r$ of a given graph $G.$ 

\begin{algorithm}
\caption{Circuit implementation of $\U^{(r)}$}\label{algo:U_r}
\textbf{Input:} $\mathcal{N}_r=\{c_0,c_1,\hdots,c_{d_r-1}\}$ such that $c_0\geq c_1\geq \hdots \geq c_{d_r-1}$ with $c_j=(c_{n-1}^{(j)},\hdots,c_1^{(j)},c_0^{(j)})$ as the $n$-bit  representation of $c_j.$ The ancilla qubit state is denoted as $\ket{a},$ $a\in\{0,1\}.$ The labeling of the $n$ qubits in the quantum register $\mathsf{QR}_1$ is the ordered sequence $n-1,\hdots,1,0$ from top to bottom.\\
\textbf{Output:} $\sum_{c\in\mathcal{N}_r} \sqrt{\frac{1}{d_r}} a_{rc}\ket{c}$ 
\begin{algorithmic}
\For{$j=0,1,\hdots,d_r-1$ }
\For{$j=0,1,\hdots,d_r-2$ }
\If{$k\in\{0,1,\hdots,n-1\}$ is the least integer such that $c_k^{(j)}=1$}
\State implement the control gate ${CU_{\sqrt{d_r-j}}}_{\{a=0,k\}}$
\State implement $CNOT_{\{a=0,k\neq l:c_{l}^{(j)}=1\}}$
\State implement $T_{\{c_j,a\}}^{(n+1)}$ 
\State implement $CNOT_{\{a=0,k\neq l:c_{l}^{(j)}=1\}}$
\EndIf
\EndFor
\If{$c_{d_r-1}\neq 0$}
\State do as above
\EndIf
\If{$c_{d_r-1}=0$}
\State implement the $T_{\{c_{d_r-1},a\}}^{(n)}$ 
\EndIf
\EndFor
\textbf{Measurement:} Measure the ancilla-qubit wrt Pauli $Z$ 
\end{algorithmic}
\end{algorithm}

\begin{figure}[htbp]
     \centering
     \subfigure [\centering ]
{\includegraphics[width=0.3\linewidth]{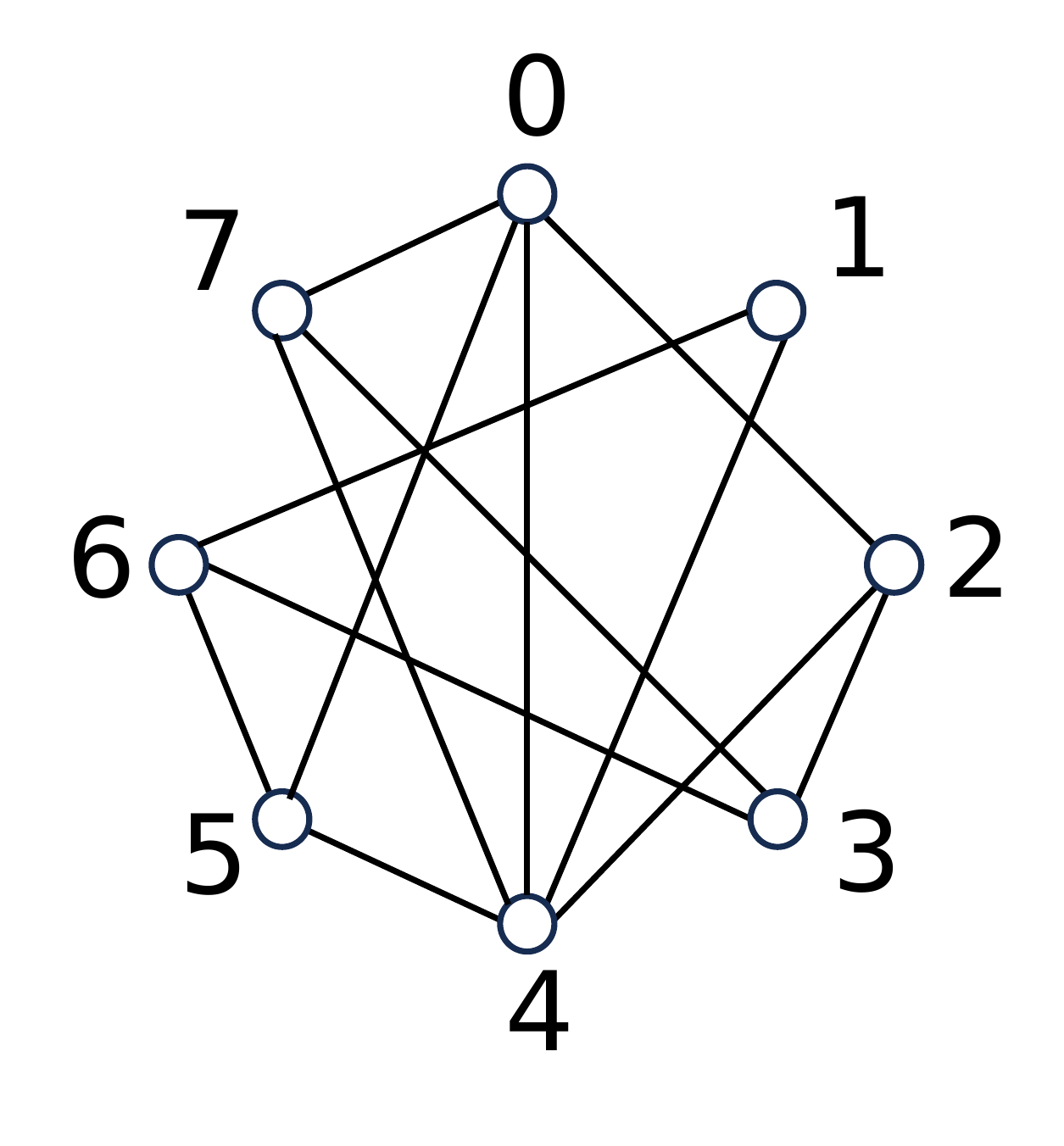}}
 \qquad
     \subfigure [\centering ]
{\includegraphics[width=1.0\linewidth]{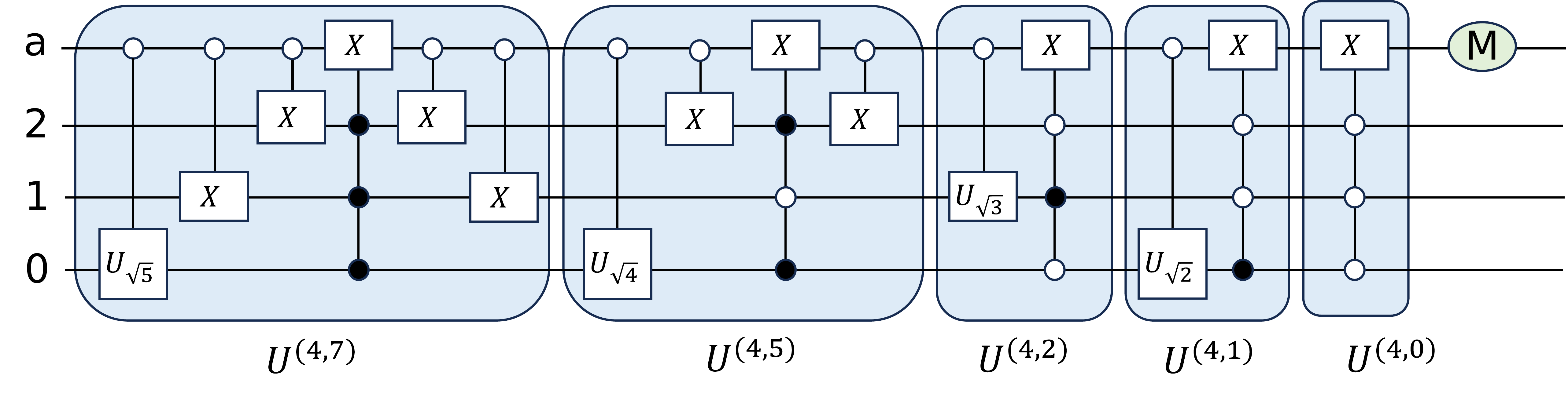}}
\centering \caption {(a) A random graph on $8$ vertices. (b) The quantum circuit following Algorithm \ref{algo:U_r} for $U^{(4)}.$}
    \label{fig:exp_G8}
\end{figure}

Note that the quantum circuit representation of $\U^{(r)}$ given by Algorithm \ref{algo:U_r} is expressed as $$\U^{(r)}=\prod_{j=d_r-1}^0 U^{(r,c_j)}: \ket{0}\ket{0}^{\otimes n}\mapsto \sum_{j=0}^{d_r-1} \frac{1}{\sqrt{d_r}}\ket{1}\ket{c_j},$$ where $U^{(r,c_j)}$ implements the map $\ket{0}\ket{0}^{\otimes n}\mapsto \sqrt{\frac{d_r-j-1}{d_r-j}}\ket{0}\ket{0}^{\otimes n} + \sqrt{\frac{1}{d_r-j}}\ket{1}\ket{c_j}$ for $0\leq j\leq d_r-1.$     

For example, consider the randomly generated graph $G$ in Figure \ref{fig:exp_G8} (a) on eight vertices. The circuit in Figure \ref{fig:exp_G8} (b) implements the operator $\U^{(4)}$ for generation of $\ket{\calN_4}.$ Note that the highest degree vertex of $G$ is $r=4$ with $\calN_4=\{c_0=7, c_1=5, c_2=2, c_3=1, c_4=0 (c_{d_4-1})\}$ with the $3$-bit representation $0=(0,0,0),$ $1=(0,0,1),$ $2=(0,1,0),$ $4=(1,0,0)$ $5=(1,0,1)$, and $7=(1,1,1).$ Then the unitary gates $CU_{\sqrt{d_r-j}}$ for $j=0,1,\hdots,4$ are given by 
\begin{eqnarray*}
    && CU_{\sqrt{5}}=\bmatrix{\sqrt{\frac{4}{5}} & - \sqrt{\frac{1}{5}} \\ \sqrt{\frac{1}{5}} & \sqrt{\frac{4}{5}}}, CU_{\sqrt{4}}=\bmatrix{\sqrt{\frac{3}{4}} & - \sqrt{\frac{1}{4}} \\ \sqrt{\frac{1}{4}} & \sqrt{\frac{3}{4}}}, \\ && CU_{\sqrt{3}}=\bmatrix{\sqrt{\frac{2}{3}} & - \sqrt{\frac{1}{3}} \\ \sqrt{\frac{1}{3}} & \sqrt{\frac{2}{3}}}, CU_{\sqrt{2}}=H, CU_{\sqrt{1}}=-\iota Y,
\end{eqnarray*} respectively.

Then \begin{eqnarray*}
 && \U^{(4)}\ket{0}\ket{000}  =\prod_{j\in\mathcal{N}_4} U^{(4,j)}\ket{0}\ket{000} \\ &=& U^{(4,0)} U^{(4,1)} U^{(4,2)} U^{(4,5)} U^{(4,7)}  \ket{0}\ket{000} \\
   &=& \sqrt{\frac{1}{5}}\ket{1}\ket{000}+\sqrt{\frac{1}{5}}\ket{1}\ket{001}+\sqrt{\frac{1}{5}}\ket{1}\ket{010}\\&& +\sqrt{\frac{1}{5}}\ket{1}\ket{101}+\sqrt{\frac{1}{5}}\ket{1}\ket{111}.
\end{eqnarray*} Then, obviously, measuring the observable Pauli $Z$ operator on ancilla qubit, the desired state is obtained with probability $1$.  

\begin{theorem}\label{thm:Ur}
Given a vertex $r$ in a graph $G=(V,E),$ the quantum circuit implementation of $\U^{(r)}$ is composed of $d_r$ two-qubit control unitary gates, $d_r$ number of $(\ceil{\log_2N}+1)$-qubit Toffoli gates, and $2\sum_{c\in \mathcal{N}_r} H(0,c)-2d_r$ CNOT gates, where $H(0,c)$ denotes the Hamming distance between the $\ceil{\log_2N}$-bit representation of $0,c\in V.$    
\end{theorem}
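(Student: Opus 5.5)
The plan is to prove the statement by a direct gate count over Algorithm \ref{algo:U_r}, using the factorization $\U^{(r)}=\prod_{j=d_r-1}^{0}U^{(r,c_j)}$ noted after the algorithm. I will count, for each neighbor $c_j\in\calN_r$, the gates that the block $U^{(r,c_j)}$ contributes, and then sum over $j=0,\hdots,d_r-1$. Correctness of the map $\ket{0}\ket{0}^{\otimes n}\mapsto \sqrt{\frac{d_r-j-1}{d_r-j}}\ket{0}\ket{0}^{\otimes n}+\sqrt{\frac{1}{d_r-j}}\ket{1}\ket{c_j}$ is taken as given from the discussion preceding the statement. The present theorem concerns only resources, so it reduces to bookkeeping.

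\emph{Generic block ($c_j\neq 0$).} Let $k$ be the least index with $c^{(j)}_k=1$. The block applies:
\begin{enumerate}
\item one two-qubit gate ${CU_{\sqrt{d_r-j}}}_{\{a=0,k\}}$, which rotates the amplitude $\sqrt{1/(d_r-j)}$ onto qubit $k$;
\item a fan-out of CNOTs from qubit $k$, conditioned on $a=0$, onto every other position $l$ with $c^{(j)}_l=1$;
\item one $(n+1)$-qubit Toffoli $T^{(n+1)}_{\{c_j,a\}}$, which flips the ancilla exactly on the branch $\ket{c_j}$;
\item the same fan-out again, to uncompute.
\end{enumerate}
Each fan-out uses one CNOT for every set bit of $c_j$ other than $k$. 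Since $H(0,c_j)$ is the number of ones in $c_j$, each fan-out costs $H(0,c_j)-1$ CNOTs. The block therefore contributes one controlled unitary, one Toffoli, and $2(H(0,c_j)-1)$ CNOTs.

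\emph{Summation.} Summing over the $d_r$ neighbors gives $d_r$ controlled two-qubit unitaries, $d_r$ Toffolis, and
\[
\sum_{j=0}^{d_r-1}2\bigl(H(0,c_j)-1\bigr)=2\sum_{c\in\calN_r}H(0,c)-2d_r
\]
CNOTs, which is the claimed count.

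\emph{Main obstacle: the case $c_{d_r-1}=0$.} Here no index $k$ exists, and the algorithm as written only applies a Toffoli. The per-block formula then gives $H(0,0)-1=-1$, so the CNOT count would be $2(-1)=-2$ for this block instead of $0$. To make the statement hold, I would implement this last block uniformly as well: one ${CU_{\sqrt{1}}}=-\iota Y$ rotation acting on the ancilla (the only remaining amplitude), plus the Toffoli, so that the controlled-unitary and Toffoli counts still each equal $d_r$. For the CNOT total I would either:
\begin{itemize}
\item read the formula with the convention that $\max(H(0,c)-1,0)$ is implicit, noting that the stated total is then exact when $0\notin\calN_r$ and overcounts by $2$ otherwise; or
\item argue that in the zero case the two saved CNOTs are absorbed into the statement's bound.
\end{itemize}
I would state this caveat explicitly, since the formula as written is exact only when $0\notin\calN_r$. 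I would also check that the Toffoli is $(n+1)$-qubit in every block (the algorithm's $T^{(n)}$ in the zero case appears to be a typo), so that the Toffoli count $d_r$ is uniform.
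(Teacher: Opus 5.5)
Your per-neighbour count is exactly the paper's argument, and it proves the statement when $0\notin\calN_r$: one controlled rotation onto the lowest set bit $k$, two layers of $H(0,c_j)-1$ CNOTs, one $(n+1)$-qubit Toffoli, then a sum over $\calN_r$. Two details in your write-up are off, however.

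First, the CNOTs in Algorithm~\ref{algo:U_r} are not a fan-out from qubit $k$. Their sole control is the ancilla in state $\ket{0}$, and their targets are the qubits $l\neq k$ with $c^{(j)}_l=1$. Suppose qubit $k$ alone were the control. Then the second layer would send the freshly prepared branch $\ket{1}\ket{c_j}$ to $\ket{1}\ket{2^k}$. If you also condition on $a=0$, each gate has two controls and can no longer be counted as a CNOT. With the ancilla as the only control, the construction works as follows. The first layer maps $\ket{0}\ket{0}^{\otimes n}\mapsto\ket{0}\ket{c_j-2^k}$ and $\ket{0}\ket{2^k}\mapsto\ket{0}\ket{c_j}$, which is what the paper's proof records. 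The Toffoli then acts only on $\ket{0}\ket{c_j}$. The second layer restores $\ket{0}\ket{0}^{\otimes n}$ without touching any branch with $a=1$. The count of $H(0,c_j)-1$ per layer is unchanged, but this is the reading under which those gates really are CNOTs.

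Second, your caveat for $0\in\calN_r$ has the wrong sign. In that case the last block is only the Toffoli. The circuit then uses $d_r-1$ controlled rotations and $2\sum_{c\in\calN_r,\,c\neq 0}\bigl(H(0,c)-1\bigr)=2\sum_{c\in\calN_r}H(0,c)-2d_r+2$ CNOTs. So the stated formula is two too small, not two too large, and it cannot be ``absorbed'' into an upper bound.

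Do not pad that block with a $-\iota Y$ on the ancilla either. Confining it to the branch $\ket{0}\ket{0}^{\otimes n}$ requires conditioning on the register being $\ket{0}^{\otimes n}$, which is exactly what the Toffoli already does. Doing both flips that branch back. Applying it unconditionally sends every $\ket{1}\ket{c_j}$ to $-\ket{0}\ket{c_j}$.

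The correct caveat is that, when $0\in\calN_r$, the rotation count drops to $d_r-1$ and the CNOT count rises by $2$ relative to the statement. The paper's own proof gestures at the rotation count in a garbled sentence (``$d_r$ \ldots\ otherwise $d_r$'') and never adjusts the CNOT formula.
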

\pf Let $\mathcal{N}_r=\{c_0,\hdots,c_{d_r}\}.$ For each $c_j\in\mathcal{N}_r,$ the two-qubit unitary gate $CU_{\sqrt{d_r-j}}$, is used to implement the unitary map $\ket{0}\ket{0}^{\otimes n}\mapsto \sqrt{\frac{d_r-j-1}{d_r-j}}\ket{0}\ket{0}^{\otimes n}+ \sqrt{\frac{1}{d_r-j}}\ket{0}\ket{0}^{\otimes (k-1)}\ket{1}\ket{0}^{\otimes (n-k)}$, where $k\in\{0,1,\hdots,n-1\}$ is the least positive integer such that $c^{(j)}_k=1$ in the $n$-bit representation of the vertex label $c_j=(c^{(j)}_{n-1},\hdots,c_1^{(j)},c_0^{(j)}),$ for each $j.$ Thus a total of $d_r$ two-qubit control unitary gates $CU_{\sqrt{d_r-j}}$ is required if $0\notin \calN_r$, otherwise it requires $d_r$ such gates. Then for each $l\neq k$ such that $c^{(j)}_l=1,$ the CNOT gate with control ancillary qubit with state $\ket{0}$, the target qubit correspond to the $l$-th qubit in the $n$-qubit quantum register. This implements the map $\sqrt{\frac{d_r-j-1}{d_r-j}}\ket{0}\ket{0}^{\otimes n}+ \sqrt{\frac{1}{d_r-j}}\ket{0}\ket{0}^{\otimes (k-1)}\ket{1}\ket{0}^{\otimes (n-k)} \mapsto \sqrt{\frac{d_r-j-1}{d_r-j}}\ket{0}\ket{c_j-2^{k}}+\sqrt{\frac{1}{d_r-j}}\ket{0}\ket{c_j}.$ Since there are $H(0,c_j)-1$ number of such $l$ such that $c^{(j)}_l=1,$ $(l\neq k)$ hence the number of CNOT gates till uesed in the circuit is $H(0,c_j)-1.$ Further, for $c_j,$ the $(n+1)$-qubit Toffoli gate $T^{(n)}_{c_j,a}$ with $n$-qubit control state is $\ket{c_j}$ and the $X$ gate is applied on the ancillary qubit implements the map: $\sqrt{\frac{d_r-j-1}{d_r-j}}\ket{0}\ket{c_j-2^{k}}+ \sqrt{\frac{1}{d_r-j}}\ket{0}\ket{c_j}\mapsto \sqrt{\frac{d_r-j-1}{d_r-j}}\ket{0}\ket{c_j-2^{k}}+\sqrt{\frac{1}{d_r-j}}\ket{1}\ket{c_j}.$ Further, the same set of CNOT gates, as used previously are used to implement the map: $\sqrt{\frac{d_r-j-1}{d_r-j}}\ket{0}\ket{c_j-2^{k}}+ \sqrt{\frac{1}{d_r-j}}\ket{1}\ket{c_j} \mapsto \sqrt{\frac{d_r-j-1}{d_r-j}}\ket{0}\ket{0}^{\otimes n}+ \sqrt{\frac{1}{d_r-j}}\ket{1}\ket{c_j}.$ This completes the proof. \hfill{$\square$}


\subsection{Preparing degree distribution state}

In this section, we propose a quantum circuit for preparing an arbitrary $n$-qubit state \begin{equation}\ket{\psi}=\sum_{j=0}^{N-1} \alpha_j\ket{j}, \, \alpha_j\in\C, j\in\{1,\hdots,N-1\},\end{equation} $\alpha_0\geq 0, \sum_{j=0}^{N-1} |\alpha_j|^2=1, \,  N=2^n$ using an ancilla qubit from an input state $\ket{0}^{\otimes n}\ket{0}$. Then we employ this circuit for preparing the degree distribution state corresponding to a graph as defined in equation (\ref{eqn:nstate}) as a special case. The vector formed by the coefficients of the state $\ket{\psi}$ is denoted as $\boldsymbol{\alpha}=(\alpha_0,\alpha_1,\hdots,\alpha_{N-1}).$

Given $\ket{\psi}$, we define the single-qubit gates

$$G_{\alpha_j} =
\bmatrix{
\sqrt{1-\frac{|\alpha_j|^2}{1-\sum_{l=1}^{j-1}|\alpha_l|^2}} & -\frac{\overline{\alpha_j}}{\sqrt{1-\sum_{l=1}^{j-1}|\alpha_l|^2}} \\[6pt]
\frac{\alpha_j}{\sqrt{1-\sum_{l=1}^{j-1}|\alpha_l|^2}} & \sqrt{1-\frac{|\alpha_j|^2}{1-\sum_{l=1}^{j-1}|\alpha_l|^2}}},$$ $j \geq 2,$
and
$$G_{\alpha_1} =
\bmatrix{
\sqrt{1-|\alpha_1|^2} & -\overline{\alpha_1} \\
\alpha_1 & \sqrt{1-|\alpha_1|^2}},$$
where $\overline{\alpha}$ denotes the complex conjugate of $\alpha \in \mathbb{C}$. It is straightforward to verify that both $G_{\alpha_j}$ and $G_{\alpha_1}$ are Hermitian and unitary matrices.

We now propose Algorithm~\ref{algo:U_sp}, which implements a unitary operator $\mathcal{U}_{sp}(\boldsymbol{\alpha})$ that prepares the state $\ket{\psi}$, such that a measurement of the ancilla qubit in the Pauli-$Z$ basis yields the desired outcome with probability $1$.

\begin{algorithm}
\caption{Circuit implementation of $\U_{sp}(\boldsymbol{\alpha})$ for general $n$-qubit states}\label{algo:U_sp}
\textbf{Description:} $\boldsymbol{\alpha}=\left(\alpha_0,\alpha_1,\hdots,\alpha_{N-1}\right)$ such that $\alpha_0\geq 0,$ $\sum_{j=0}^{N-1} |\alpha_j|^2=1.$ Let $(j_{n-1},\hdots,j_1,j_0)$ be the binary representation of $j$ such that $j=\sum_{l=0}^{n-1} j_l2^l$ Then ancilla qubit state is denoted as $\ket{a},$ $a\in\{0,1\}.$ \\ 
\textbf{Input:} $\ket{0}^{\otimes n}\ket{0}$\\ 
\textbf{Output:} $\sum_{j=0}^{N-1}  \alpha_{j}\ket{j}$ 
\begin{algorithmic}
\If{$\alpha_j\neq 0$} 
\For{$j=1,\hdots,N-1$ }
\If{$k$ is the least integer such that $j_k=1$}
\State implement the control gate $G_{\alpha_j}$ with target qubit $k$, control-qubit $a=0$ 
\EndIf
\If{$j_l=1$, $l\neq k$} 
\State implement $CNOT_{a=0,l}$ 
\EndIf
\State implement $T_{j,a}^{(n+1)}$ 
\If{$j_l=1$, $l\neq k$}
\State implement $CNOT_{a=0,l}$ 
\EndIf
\EndFor
\For{$j=0$ }
\State implement $T_{0,a}^{(n+1)}$ 
\EndFor
\EndIf
\textbf{Measurement:} Measure the ancilla-qubit wrt Pauli $Z$ 
\end{algorithmic}
\end{algorithm}

From Algorithm \ref{algo:U_sp} it is clear that for each vertex $j$ in $G$ there is a unitary gate $U^{(j)}_G$ composed of a controlled $G_{\alpha_j}$ and a number of CNOT gates decided by the Hamming weight of $j.$ Thus we have $$\U_{sp}(\boldsymbol{\alpha})=U_{\alpha_{(N-1),\hdots,\alpha_1,\alpha_0}} :=\prod_{j=0}^{N-1} U^{(j)},$$ where $U^{(j)},$ $j\neq 0$ implements the following map: \begin{eqnarray*}\ket{0}\ket{0}^{\otimes n} &\mapsto& \sqrt{1-\dfrac{|\alpha_j|^2}{1-\sum_{l=1}^{j-1}|\alpha_l|^2}} \ket{0} \ket{0}^{\otimes n} \\ && + \dfrac{\alpha_j}{\sqrt{1-\sum_{l=1}^{j-1}|\alpha_l|^2}} \ket{1}\ket{j},\end{eqnarray*} and $U^{(0)}$ maps $\ket{0}\ket{0}^{\otimes n}$ to $\ket{1}\ket{0}^{\otimes n}.$

\begin{theorem}\label{thm:UD}
The quantum circuit implementation of $\U_{sp}(\boldsymbol{\alpha}),$ $\boldsymbol{\alpha}=\left(\alpha_0,\alpha_1,\hdots,\alpha_{N-1}\right)$ is composed of $d$ two-qubit control unitary gates, $d$ number of $(\lceil\log_2N\rceil+1)$-qubit Toffoli gates, and $2\sum_{j=1:\alpha_j\neq 0}^{N-1} H(0,j)$ CNOT gates, where $H(0,j)$ denotes the Hamming distance between the $\ceil{\log_2N}$-bit representations of $0$ and $j.$ Here $d$ is the number of indices $j$ such that $\alpha_j\neq 0.$     
\end{theorem}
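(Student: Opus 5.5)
The argument follows the proof of Theorem \ref{thm:Ur}: first establish correctness of Algorithm \ref{algo:U_sp} one factor at a time, then count gates factor by factor. Write $\U_{sp}(\boldsymbol{\alpha})=\prod_j U^{(j)}$, where the product runs over the indices $j$ with $\alpha_j\neq 0$. Set $s_j=1-\sum_{l<j}|\alpha_l|^2$, taking the sum over nonzero indices in the order processed, and put $\beta_j=\alpha_j/\sqrt{s_j}$. The claim to prove by induction on the processed indices is that after the $U^{(j)}$ for the first few indices have been applied, the register is in the state
\[
\sqrt{s_{j+1}}\,\ket{0}\ket{0}^{\otimes n}+\sum_{l\le j,\ \alpha_l\neq 0}\alpha_l\ket{1}\ket{l}.
\]
Only the branch $\ket{0}\ket{0}^{\otimes n}$ is acted on by the next block. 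Every gate in that block is controlled on the ancilla being $\ket{0}$ or on the full bit pattern of $j$, so the branches already marked $\ket{1}\ket{l}$ are untouched.

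For a single block, let $k$ be the least index with $j_k=1$.
\begin{itemize}
\item The ancilla-$0$-controlled $G_{\alpha_j}$ on qubit $k$ sends $\ket{0}\ket{0}^{\otimes n}$ to $\sqrt{1-|\beta_j|^2}\ket{0}\ket{0}^{\otimes n}+\beta_j\ket{0}\ket{2^k}$. This reproduces the computation in Theorem \ref{thm:Ur}.
\item The CNOTs onto the remaining qubits $l\neq k$ with $j_l=1$ turn these into $\ket{0}\ket{j-2^k}$ and $\ket{0}\ket{j}$.
\item The Toffoli $T^{(n+1)}_{j,a}$ flips the ancilla only on $\ket{j}$.
\item The same CNOTs applied again restore $\ket{0}\ket{0}^{\otimes n}$ on the unflagged branch, while the flagged branch becomes $\ket{1}\ket{j}$.
\end{itemize}
The amplitude left on the unflagged branch is $\sqrt{s_j}\sqrt{1-|\beta_j|^2}=\sqrt{s_j-|\alpha_j|^2}=\sqrt{s_{j+1}}$, which closes the induction.

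For the index $j=0$ handled last, the remaining amplitude is $\sqrt{1-\sum_{l\ge1}|\alpha_l|^2}=|\alpha_0|=\alpha_0$, using $\alpha_0\ge 0$. The final Toffoli $T_{0,a}^{(n+1)}$, controlled on $\ket{0}^{\otimes n}$, flips the ancilla, giving $\ket{1}\sum_j\alpha_j\ket{j}$. A Pauli $Z$ measurement of the ancilla therefore returns $\ket{1}$ with probability one and leaves $\ket{\psi}$.

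For the gate count, each nonzero $j\ge1$ uses one controlled $G_{\alpha_j}$, one $(n+1)$-qubit Toffoli and $2(H(0,j)-1)$ CNOTs, while $j=0$ (when $\alpha_0\neq 0$) uses a single Toffoli. This gives $d$ Toffolis, where $d$ counts nonzero entries. It also gives at most $d$ controlled single-qubit gates, matching the stated $d$ if the $j=0$ flip is counted as a controlled-$X$.

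The main obstacle is reconciling the CNOT count, since the construction literally uses $2\sum_{j\neq0}(H(0,j)-1)$ CNOTs. I would present the stated $2\sum H(0,j)$ as an upper bound, or absorb the controlled-$G$ on qubit $k$ into the count, which the paper's convention appears to do when compared with Theorem \ref{thm:Ur}. Both the normalization bookkeeping and this convention need to be made explicit.
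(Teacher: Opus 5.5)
Your correctness argument is the route the paper intends. Its entire proof of this theorem is the remark that the argument is similar to that of Theorem~\ref{thm:Ur}. Your induction on the unflagged amplitude $\sqrt{s_j}$, closing with $\sqrt{1-\sum_{l\ge1}|\alpha_l|^2}=\alpha_0$, is sound and more explicit than anything written in the paper. One minor slip: in your invariant the flagged sum should run over $1\le l\le j$, since the term $\alpha_0\ket{1}\ket{0}$ only appears after the final Toffoli.

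The count discrepancy you found is genuine. It lies in the statement, not in your argument, and the paper's one-line proof does nothing to resolve it.
\begin{itemize}
\item Following the algorithm literally gives $2\sum_{j\ge1,\alpha_j\ne0}(H(0,j)-1)$ CNOTs. This agrees with the paper's own accounting $2\sum_{c}H(0,c)-2d_r$ in Theorem~\ref{thm:Ur}.
\item It also gives only $d-1$ controlled rotations whenever $\alpha_0\neq0$. The paper's own $N=8$ example uses only $G_{\alpha_1},\dots,G_{\alpha_7}$ with $d=8$.
\item In that same example the algorithm produces $10$ CNOTs, against the $24$ the statement claims.
\end{itemize}
So you should state these exact counts and present the theorem's figures as upper bounds.

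Your other two reconciliations do not hold up.
\begin{itemize}
\item Counting the $j=0$ flip as a controlled-$X$ double-counts $T^{(n+1)}_{0,a}$. That gate is already one of the $d$ Toffolis, and it is an $(n+1)$-qubit gate, not a two-qubit one.
\item Counting each controlled $G_{\alpha_j}$ as one additional gate gives $2H(0,j)-1$ per index, not $2H(0,j)$.
\end{itemize}
The figure $2H(0,j)$ is recovered exactly only if each controlled $G_{\alpha_j}$ (which lies in $\mathrm{SU}(2)$) is compiled into two CNOTs plus single-qubit gates. In that case those gates cannot also be listed separately among the $d$ two-qubit controlled unitaries.
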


\pf  The proof is similar to the proof of Theorem \ref{thm:Ur}.

\begin{figure}[htbp]
     \centering
     \includegraphics[width=1.0\linewidth]{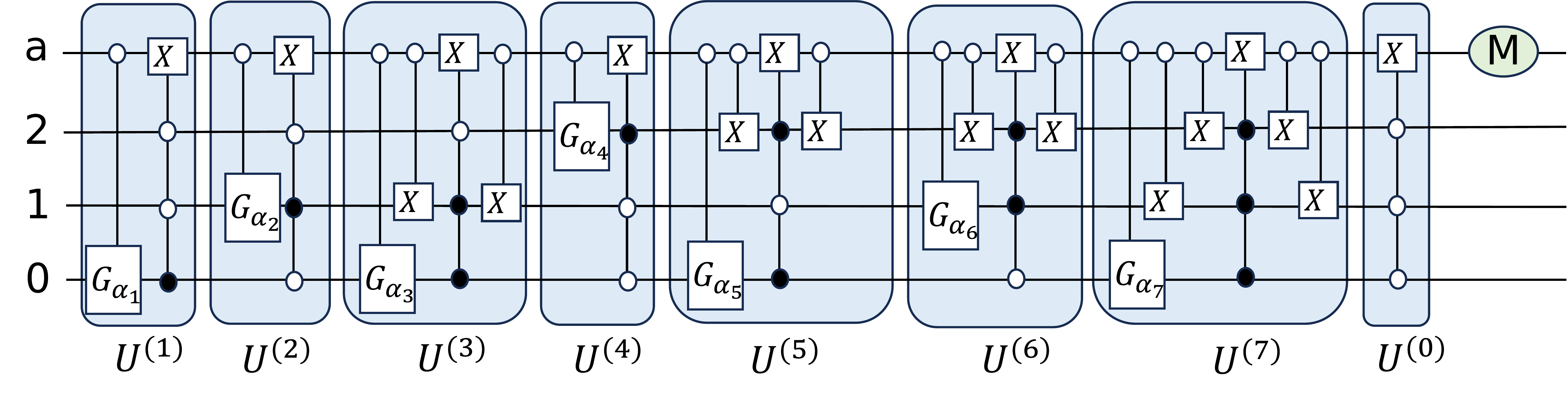}
\centering \caption { The quantum circuit following Algorithm \ref{algo:U_sp} for preparing degree distribution state for the graph in Figure \ref{fig:exp_G8} (a).}
    \label{fig:qc_exp_G8}
\end{figure}

Consequently, setting $\alpha_j=\sqrt{\frac{d_j}{2E}}$, the operator $\U_{sp}(\boldsymbol{\alpha})$ represents $\U_D$ and the  degree distribution state for a graph can be obtained. Here  $E$ denotes the number of edges and $d_j$ denotes the degree of vertex $j$.  For example, in Figure \ref{fig:qc_exp_G8}, we present the quantum circuit for the graph in Figure \ref{fig:exp_G8} (a). Note that $d_0=4,$ $d_1=2,$ $d_2=3,$ $d_3=3,$ $d_4=5,$ $d_5=3,$ $d_6=3,$ $d_7=3$, and hence $2E=26.$ Then the degree distribution state is given by \begin{eqnarray*}
    \ket{G_d} &=& \sqrt{\frac{4}{26}} \ket{0} + \sqrt{\frac{2}{26}} \ket{1} + \sqrt{\frac{3}{26}} \ket{2}+ \sqrt{\frac{3}{26}} \ket{3} \\ && + \sqrt{\frac{5}{26}} \ket{4} + \sqrt{\frac{3}{26}} \ket{5} + \sqrt{\frac{3}{26}} \ket{6} + \sqrt{\frac{3}{26}} \ket{7}.
\end{eqnarray*} Then the Algorithm \ref{algo:U_sp} gives the circuit in Figure \ref{fig:qc_exp_G8} to generate $\ket{G}_d$ with the input state $\ket{0}\ket{000}.$ Here
\begin{eqnarray*}
   && G_{\alpha_1}=\bmatrix{\sqrt{\frac{12}{13}} & -\sqrt{\frac{1}{13}} \\ \sqrt{\frac{1}{13}}& \sqrt{\frac{12}{13}} }, \, G_{\alpha_2}=\bmatrix{\sqrt{\frac{7}{8}} & -\sqrt{\frac{1}{8}} \\ \sqrt{\frac{1}{8}}& \sqrt{\frac{7}{8}} }, \\ && \,G_{\alpha_3}=\bmatrix{\sqrt{\frac{6}{7}} & -\sqrt{\frac{1}{7}} \\ \sqrt{\frac{1}{7}}& \sqrt{\frac{6}{7}} }, \, G_{\alpha_4}=\bmatrix{\sqrt{\frac{13}{18}} & -\sqrt{\frac{5}{18}} \\ \sqrt{\frac{5}{18}}& \sqrt{\frac{13}{18}} } \\
   && G_{\alpha_5}=\bmatrix{\sqrt{\frac{10}{13}} & -\sqrt{\frac{3}{13}} \\ \sqrt{\frac{3}{13}}& \sqrt{\frac{10}{13}} }, \, G_{\alpha_6}=\bmatrix{\sqrt{\frac{7}{10}} & -\sqrt{\frac{3}{10}} \\ \sqrt{\frac{3}{10}}& \sqrt{\frac{7}{13}} }, \\ && G_{\alpha_7}=\bmatrix{\sqrt{\frac{4}{7}} & -\sqrt{\frac{3}{7}} \\ \sqrt{\frac{3}{7}}& \sqrt{\frac{4}{7}} }.
\end{eqnarray*} Then $\U_{D}\ket{0}\ket{000} = U^{(0)}U^{(7)}\hdots U^{(1)} \ket{0}\ket{000}=\ket{1}\ket{G_d}.$

Finally, measuring the ancilla qubit wrt the Pauli $Z$ operator, we obtain the desired state with probability $1.$ Indeed, note that $R_y(\theta)=\exp(i\theta\sigma_y/2)=\bmatrix{\cos\frac{\theta}{2} & -\sin\frac{\theta}{2}\\ \sin\frac{\theta}{2} & \cos\frac{\theta}{2}}.$ Then note that $G_{\alpha_j}=R_y(\theta_j),$ where $\theta_j=2 \arctan\left(\sqrt{\frac{d_j}{2|E|-\sum_{l=1}^j d_l}}\right)$ for $j\geq 2,$ and $G_{\alpha_1}=R_y(\theta_1)$ where $\theta_1=2\arctan\left(\sqrt{\frac{d_1}{2|E|-d_1}}\right).$

\section{Estimating subgraph counts}\label{sec:4}

 In this section, we propose methods for estimating subgraph frequency of a graph $G$ using $m$ copies of $\ket{G}$, where $m$ is the number of edges of the subgraph. Our proposal is based on encoding a given subgraph $S$ into a  projector $P_S$ and estimating the subgraph frequency through the estimation of $\bra{G}^{\otimes m}P_S\ket{G}^{\otimes m}$. In particular, we describe the construction of $P_S$ when $S$ is a cycle and a clique for undirected graphs. The operator $P_S$ for other patterns can be constructed similarly, even for directed graphs.

\subsection{Clique counting}

Counting cliques on $m$ vertices in a graph $G$ of $N$ vertices is concerned with finding the number of complete subgraphs $K_m$ of $G.$ In this section, we use ${m\choose 2}$ copies of the embedding state $\ket{G}$ with the aid of $P_{K_m}$ as follows. 

For any pair of $(i,j)\in [N]\times [N]$ with $i\neq j,$ consider the state $\ket{i}\ket{j},$ defined by the $\log_2N$-bit representation of $i$ and $j$, where $[N]=\{0,1,\hdots,N-1\}$. We also define the set $S_m\subset [N]$ with $|\mathcal{S}|=m.$ Let $\mathcal{E}^o$ denote the set of all orderings of all the edges in the clique induced by $S_m.$ 
Note that the number of edges in the clique induced by the set $S_m$ is ${m\choose 2}.$ Now considering $S_m$ as a linearly ordered set, pick all the edges of the clique induced by $S_m$ as the ordered set $\mathcal{E}_{S_m} :=\{(u,v): u<v, \, u,v\in S_m\}.$ With a bit of abuse of notation, we denote $$\ket{\mathcal{E}_{S_m}}=\otimes_{(u,v)\in \mathcal{E}_{S_m}} \ket{u}\ket{v}.$$ For example, if $S_m=\{0,1,2,3\}$ then $\mathcal{E}_{S_m}=\{(0,1), (0,2), (0,3), (1,2), (1,3), (2,3)\}$ and hence $$\ket{\mathcal{E}_{S_m}}=\ket{0}\ket{1} \, \ket{0}\ket{2} \, \ket{0}\ket{3} \, \ket{1}\ket{2} \, \ket{1}\ket{3} \, \ket{2}\ket{3}.$$

Finally, we define 
\begin{equation}\label{eqn:pkm}
    P_{K_m}:= \sum_{S_m\subset [N]} \ket{\mathcal{E}_{S_m}}\bra{\mathcal{E}_{S_m}}.
\end{equation}

Now, the state $\ket{G}^{\binom{m}{2}}$ include terms which associate to any collection of $\binom{m}{2}$ edges in all possible orderings. Indeed,
\begin{eqnarray*}&& \ket{G}^{\binom{m}{2}} =\frac{1}{(\sqrt{2|E|})^{\binom{m}{2}}} \sum_{r_j,c_j\in [N]} a_{r_1c_1}\cdots a_{r_{\binom{m}{2}}c_{\binom{m}{2}}} \\ && \hspace{2cm}\ket{r_1}\ket{c_1} \, \ket{r_2}\ket{c_2} \, \cdots \ket{r_{\binom{m}{2}}}\ket{c_{\binom{m}{2}}}.\end{eqnarray*}

Now measuring the observable $P_{K_m}$ for the quantum state $\ket{G}^{\binom{m}{2}},$ the probability of obtaining $P_{K_m}\ket{G}^{\binom{m}{2}}$ is given by
$$\bra{G}^{\binom{m}{2}} P_{K_m} \ket{G}^{\binom{m}{2}} =\frac{\kappa_m}{(2|E|)^{\binom{m}{2}}},$$ where $\kappa_m$ is the number of cliques on $m$ vertices in $G.$

\subsection{Cycle counting}

Estimating number of cycles of length $k$ i.e. having $k$ edges in the cycle denoted by $C_k$ has been one of the prominent research topics in the literature. Like clique counting, we propose a measurement scheme for the quantum state $\ket{G}^{\otimes k}$ for estimating number of cycles on $k$ vertices.

First, note that a cycle $C_k$ on $k$ vertices has $k$ edges. If $v_1,\hdots,v_k$ are the vertices of $C_k$ then for each ordering of them there is a basis state vector $\ket{v_1}\ket{v_2}\ket{v_2}\ket{v_3}\cdots\ket{v_k}\ket{v_1}$ in $\ket{G}^{\otimes k}.$ Denoting the  vertex set $S_k\subset [N]$ of $C_k$ we define the rank-one project operator
\begin{eqnarray*}
    P_{C_k} = \sum_{S_k} \ket{v_1}\bra{v_1} \, \ket{v_2}\bra{v_2} \, \ket{v_2}\bra{v_2} \, 
    \cdots\ket{v_k}\bra{v_k}\ket{v_1}\bra{v_1}.
\end{eqnarray*} Then we have $$\bra{G}^{\otimes k}P_{C_k}\ket{G}^{\otimes k} = \frac{\mathfrak{c}_k}{2^k\, |E|^k},$$ where $\mathfrak{c}_k$ is the number of cycles on $k$ vertices in $G.$

\subsection{Counting triangles} 

A triangle in a graph is a subgraph which can be considered a a clique or cycle on three vertice i.e. $K_3$ or $C_3.$ Estimating number of triangles has been one of the active areas of research both in quantum and classical computing.
Given a graph $G$ on $N$ vertices with vertex set $V$ and edge set $E$, we consider a register of $6\ceil{\log_2N}$ working qubit register, and six ancillary qubits. Each set of $2\ceil{\log_2N}$-qubit working qubit register along with two ancillary qubits would be used to prepare $3$ copies of $\ket{G}.$ The proposed method for estimating total number of triangles is based on measurement operators to the entire collection of working qubits system.  Thus we have have a system with state \begin{eqnarray*}&& \ket{G}^{\otimes 3}  \frac{1}{(2|E|)^{3/2}}\left(\sum_{\substack{r_1,c_1\in V\\ r_1\neq c_1}}a_{r_1c_1}\ket{r_1}\ket{c_1}\right) \\ && \hfill{\left(\sum_{\substack{r_2,c_2\in V\\ r_2\neq c_2}}a_{r_2c_2}\ket{r_2}\ket{c_2}\right) \left(\sum_{\substack{r_3,c_3\in V\\ r_3\neq c_3}} a_{r_3c_3} \ket{r_3}\ket{c_3}\right)} \\
&=& \frac{1}{(2|E|)^{3/2}} \sum_{\substack{r_1,c_1,r_2,c_2,r_3,c_3\in V\\ r_1\neq c_1, r_2\neq c_2, r_3\neq c_3}} a_{r_1c_1} a_{r_2c_2} a_{r_3c_3} \\ && \hspace{3.5cm}\ket{r_1}\ket{c_1} \ket{r_2}\ket{c_2} \ket{r_3}\ket{c_3}.
\end{eqnarray*}

Now for any triangle $T_{ijk}$ in the graph with vertices $i,j,k$, the corresponding basis state in $\ket{G}^{\otimes 3}$ should be $\ket{i}\ket{j}\ket{j}\ket{k}\ket{k}\ket{i},$ $\ket{j}\ket{i}\ket{i}\ket{k}\ket{k}\ket{j},$ $\ket{k}\ket{i}\ket{i}\ket{j}\ket{j}\ket{k}.$ We set to consider the first one by setting $i<j<k,$ and hence we define the project operator
\begin{eqnarray*}
    P_{\triangle} &=& \sum_{\substack{i,j,k\in V\\i< j< k}} \left(\ket{i}\ket{j}\ket{j}\ket{k}\ket{k}\ket{i}\right) \left(\bra{i}\bra{j}\bra{j}\bra{k}\bra{k}\bra{i}\right) \\
    &=&  \sum_{\substack{i,j,k\in V\\i< j< k}} (\ket{i}\ket{j}\bra{i}\bra{j}) \, (\ket{j}\ket{k}\bra{j}\bra{k}) \, (\ket{k}\ket{i}\bra{k}\bra{i}).
\end{eqnarray*}

Now note that $P_{\triangle}\ket{G}^{\otimes 3}$ would be nonzero only when there is a triangle, and each triangle on three vertices $i,j,k$ would be resulting in three basis states. $\ket{i}\ket{j}\ket{j}\ket{k}\ket{k}\ket{j},$ $\ket{j}\ket{i}\ket{i}\ket{k}\ket{k}\ket{j},$ $\ket{k}\ket{i}\ket{i}\ket{j}\ket{j}\ket{k}.$ Consequently, we have $$P_{\triangle}\ket{G}^{\otimes 3}= \frac{1}{(2|E|)^{3/2}} \sum_{\substack{i,j,k\in V\\i< j< k}} \ket{i}\ket{j}\ket{j}\ket{k}\ket{k}\ket{i},$$
which finally provides the probability of obtaining $P\ket{G}^{\otimes 3}$, given by $$\bra{G}^{\otimes 3}P_{\triangle}\ket{G}^{\otimes 3}=\frac{\kappa_3}{8|E|^3},$$ where $\kappa_3$ is the number of triangles in the graph $G.$ Obviously, $\ket{G}^{\otimes 3}$ is a linear combinations of triangles formed by three vertices $i,j,k$ and the corresponding basis state is given by $\ket{i}\ket{j}\ket{k}.$

\subsection{Sample complexity of estimating $\bra{G}^{\otimes e_s}P_S\ket{G}^{\otimes e_s}$}
 For a given subgraph $S\in\{K_m, C_k\}$, note that the probability of success \begin{equation}\label{eqn:sp}p_s=\bra{G}^{\otimes e_s}P_S\ket{G}^{\otimes e_s}=\frac{\# S}{(2|E|)^{e_s}},\end{equation} where $e_s$ is the number of edges in the subgraph $S,$ and $\# S$ denotes the number of instances of $S$ in the given graph $G.$ There can be multiple ways to estimate $p_s.$ First, we recall the Hoeffding's inequality \cite{hoeffding1963probability} as follows. 

 \begin{proposition}\label{prop:HI}
  Let $X_1, \hdots, X_K$ be iid random variables with values in $\{0,1\}$ and mean $\mathbb{E}[X_j]=p.$ Let $\widehat{p}=\frac{1}{k}\sum_{j=1}^K X_j.$ Then for any $\alpha>0,$
  $$\mbox{Pr}\left(|\widehat{p}-p|\geq \alpha\right)\leq 2\exp\left(-2K\alpha^2\right).$$
 \end{proposition}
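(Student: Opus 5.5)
We prove the stated form of Hoeffding's inequality for Bernoulli variables; the denominator $k$ in the definition of $\widehat{p}$ is read as $K$. The approach is the standard Chernoff exponential-moment argument combined with Hoeffding's lemma for bounded random variables. We treat the upper tail $\Pr(\widehat{p}-p\geq\alpha)$ first. The lower tail follows by applying the same argument to $1-X_j$, which are also iid $\{0,1\}$-valued with mean $1-p$. A union bound over the two tails then produces the factor $2$.

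\emph{Step 1 (Chernoff bound).} For any $\lambda>0$, Markov's inequality applied to $e^{\lambda\sum_j (X_j-p)}$ gives
\begin{equation*}
\Pr\Big(\sum_{j=1}^K (X_j-p)\geq K\alpha\Big)\leq e^{-\lambda K\alpha}\,\prod_{j=1}^K \mathbb{E}\big[e^{\lambda(X_j-p)}\big].
\end{equation*}
The factorisation of the expectation uses independence, and the fact that the $X_j$ are identically distributed makes all factors equal.

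\emph{Step 2 (Hoeffding's lemma).} The key step is to show that a centred variable $Y=X-p$ with values in an interval of length $1$ satisfies $\mathbb{E}[e^{\lambda Y}]\leq e^{\lambda^2/8}$. For Bernoulli $X$ this can be done directly. Set
\begin{equation*}
\phi(\lambda)=\log\big(1-p+pe^{\lambda}\big)-\lambda p .
\end{equation*}
Then $\phi(0)=\phi'(0)=0$. Moreover, $\phi''(\lambda)=q(1-q)$, where $q=\frac{pe^\lambda}{1-p+pe^\lambda}\in[0,1]$, so $\phi''(\lambda)\leq 1/4$. Taylor's theorem with remainder then gives $\phi(\lambda)\leq\lambda^2/8$. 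I expect this bound on $\phi''$ to be the only nontrivial step; the rest is algebra.

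\emph{Step 3 (optimisation).} Combining the two steps, the upper tail is at most $\exp(-\lambda K\alpha+K\lambda^2/8)$. Choosing $\lambda=4\alpha$ minimises this exponent and yields the bound $e^{-2K\alpha^2}$. Adding the symmetric lower-tail bound completes the proof.
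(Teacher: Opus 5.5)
Your proof is correct: the Chernoff bound, the Bernoulli case of Hoeffding's lemma via $\phi''=q(1-q)\le 1/4$, the choice $\lambda=4\alpha$, and a union bound over the two tails (using $1-X_j$ for the lower tail) together give the standard argument, and reading the stray $k$ as $K$ is the intended interpretation. The paper gives no proof of its own; it simply recalls the inequality from Hoeffding's 1963 paper, and your argument is essentially the standard proof of that cited result.
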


We discuss two standard methods for estimating $p_s$: (i) Using Amplitude Estimation \cite{brassard2000quantum,nielsen2001quantum} and (ii) Two-outcome based POVM, and the sample complexities associated with it. 

\subsubsection{Amplitude Amplification and Estimation (AAE)} 

\begin{theorem}\label{thm:aae}
Let $\ket{\psi}$ be a $2\log_2 N$-qubit quantum state that can be prepared by a unitary $U_\Psi$, and let $P_S$ be a projector acting on $(\mathbb{C}^{2^{2\log_2 N}})^{\otimes m}$. Define
$p_s := \bra{\psi}^{\otimes m}P_S\ket{\psi}^{\otimes m}.$
Set $A := U_\psi^{\otimes m}$
and the reflection $S_P:= I - 2P_S.$ Then there exists a quantum algorithm that outputs an estimate $\widetilde{p}$ such that
$|\widetilde{p}-p_s|\le \epsilon$
with success probability at least $1-\delta$, using
$O\!\left(\frac{1}{\epsilon}\log\frac{1}{\delta}\right)$
applications of $A$, $A^\dagger$, and $S_P$.
\end{theorem}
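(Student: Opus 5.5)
This is a direct instance of Brassard--H\o{}yer--Mosca--Tapp amplitude estimation \cite{brassard2000quantum}, followed by median amplification of the success probability. The plan has four steps.

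First, write the prepared state in the ``good/bad'' decomposition that amplitude estimation needs. Let $\ket{\Psi}:=A\ket{0}=\ket{\psi}^{\otimes m}$. Here $A=U_\psi^{\otimes m}$ is itself a unitary, built from $m$ parallel copies of $U_\psi$. Decompose
\[
\ket{\Psi}=\sin\theta\,\ket{\Psi_1}+\cos\theta\,\ket{\Psi_0},
\]
where $\ket{\Psi_1}=P_S\ket{\Psi}/\|P_S\ket{\Psi}\|$ and $\ket{\Psi_0}=(I-P_S)\ket{\Psi}/\|(I-P_S)\ket{\Psi}\|$. The angle satisfies $\sin^2\theta=p_s$ with $\theta\in[0,\pi/2]$. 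If $p_s\in\{0,1\}$ one of the two vectors is undefined, but the argument below still goes through trivially.

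Second, form the Grover iterate $Q:=-A S_0 A^\dagger S_P$, where $S_0=I-2\ket{0}\bra{0}$ is the reflection about the all-zero state. The reflection $S_0$ is a multi-controlled phase gate and needs no calls to $A$ or $S_P$. The standard two-dimensional analysis applies: on $\mathrm{span}\{\ket{\Psi_0},\ket{\Psi_1}\}$, $Q$ acts as a rotation by $2\theta$, with eigenvalues $e^{\pm 2i\theta}$ and eigenvectors $\frac{1}{\sqrt2}(\ket{\Psi_1}\pm i\ket{\Psi_0})$. The vector $\ket{\Psi}$ is an equal-weight superposition of these two eigenvectors.

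Third, run phase estimation on $Q$ with input $\ket{\Psi}$ and $t$ ancilla qubits, using $M=2^t$ controlled applications of $Q$. Each application of $Q$ costs one call each to $A$, $A^\dagger$ and $S_P$. Phase estimation returns $\tilde\theta$, and we output $\tilde p=\sin^2\tilde\theta$. By the BHMT bound, with probability at least $8/\pi^2$,
\[
|\tilde p-p_s|\le \frac{2\pi\sqrt{p_s(1-p_s)}}{M}+\frac{\pi^2}{M^2}\le \frac{\pi}{M}+\frac{\pi^2}{M^2}.
\]
Choosing $M=\lceil c/\epsilon\rceil$ for a suitable constant $c$ (e.g.\ $c=2\pi$) makes the error at most $\epsilon$, at a cost of $O(1/\epsilon)$ oracle calls.

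Fourth, boost the confidence. Repeat the procedure $K=O(\log(1/\delta))$ times independently and output the median of the $K$ estimates. The median fails only if at least half of the runs fail. Each run succeeds independently with probability at least $8/\pi^2>1/2$, so a Chernoff/Hoeffding bound (Proposition~\ref{prop:HI}, applied to the indicator variables of failure) shows this happens with probability at most $e^{-\Omega(K)}\le\delta$. The total cost is $O(\frac1\epsilon\log\frac1\delta)$ applications of $A$, $A^\dagger$ and $S_P$, as claimed.

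The main obstacle is the error analysis of phase estimation in step three, namely converting phase error into error in $\sin^2\theta$. I would import the BHMT lemma rather than re-derive it. The one point to check carefully is that $A$ really is unitary with no postselection. The state preparation of Algorithm~\ref{algo:U_DU_r} uses ancilla measurements, but by the preceding theorem those measurements succeed with probability one, so they can be deferred and $U_\psi$ treated as a unitary on working plus ancilla qubits. In that setting $P_S$ should be read as $P_S\otimes\ket{1}\bra{1}$ on the ancillas, or equivalently as $P_S\otimes I$ on the ancillas, since the ancillas are deterministic; this does not change $p_s$.
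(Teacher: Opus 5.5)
Your proposal is correct and follows essentially the same route as the paper. Both use the good/bad decomposition with $\sin^2\theta=p_s$, the iterate $Q=-AS_0A^\dagger S_P$ acting as a rotation by $2\theta$, phase estimation with $O(1/\epsilon)$ applications of $Q$, and median amplification at an extra $O(\log(1/\delta))$ cost; the paper converts angle error to probability error via $|\frac{d}{d\theta}\sin^2\theta|=|\sin 2\theta|\le 1$ where you cite the BHMT bound on $|\tilde p-p_s|$ directly, and your remark on deferring the probability-one ancilla measurements is a point the paper does not address.
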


\begin{proof}
Let $A\ket{0}^{\otimes m} = \ket{\psi}^{\otimes m}.$ Then decompose the state with respect to the projector $P_S$ as
$$\ket{\psi}^{\otimes m}
= P_S\ket{\psi}^{\otimes m} + (I-P_S)\ket{\psi}^{\otimes m}.$$

Now define
$$\ket{\psi_G} := \frac{P_S\ket{\psi}^{\otimes m}}{\sqrt{p_s}},
\qquad
\ket{\psi_B} := \frac{(I-P_S)\ket{\psi}^{\otimes m}}{\sqrt{1-p_s}}.$$ Consequently,
$$\ket{\psi}^{\otimes m}
= \sqrt{p_s}\,\ket{\psi_G} + \sqrt{1-p_s}\,\ket{\psi_B},$$
where $\ket{\psi_G}\in \mathrm{Ran}(P_S)$ and $\ket{\psi_B}\in \mathrm{Ran}(I-P_S)$ are orthonormal.

Now let $\theta\in[0,\pi/2]$ be such that $\sin^2\theta = p$ and hence
$$\ket{\psi}^{\otimes m}
= \sin\theta\,\ket{\psi_G} + \cos\theta\,\ket{\psi_B}.$$

Define the reflection about the initial state
$S_0 = I - 2\ket{0}^{\otimes m}\bra{0}^{\otimes m},$
and the Grover diffusion operator
$Q = - A S_0 A^\dagger S_P.$
Further, since
$A S_0 A^\dagger = I - 2\ket{\psi}^{\otimes m}\bra{\psi}^{\otimes m},$
the operator $Q$ is a product of two reflections, one is about $\mathrm{Ran}(P_S)$ and the other one is about $\ket{\psi}^{\otimes m}$. Therefore, $Q$ preserves the two-dimensional subspace
$\mathrm{span}\{\ket{\psi_G},\ket{\psi_B}\}$
and acts on it as a rotation by angle $2\theta$. In particular,
$$Q^j \ket{\psi}^{\otimes m}
=
\sin((2j+1)\theta)\ket{\psi_G}
+
\cos((2j+1)\theta)\ket{\psi_B}.$$

Thus, estimating the eigenphases of $Q$ allows one to estimate $\theta$, and hence $p_s = \sin^2\theta.$
By standard amplitude estimation, one obtains an estimate $\widetilde{\theta}$ such that
$|\widetilde{\theta}-\theta| \le O(\epsilon)$
using $O(1/\epsilon)$ applications of $Q$. Since
$\left|\frac{d}{d\theta}\sin^2\theta\right| = |\sin 2\theta| \le 1,$
it follows that
$$|\widetilde{p}-p|
=
|\sin^2(\widetilde{\theta}) - \sin^2(\theta)|
\le
|\widetilde{\theta}-\theta|
\le \epsilon.$$

Finally, repeating the procedure and taking the median boosts the success probability to at least $1-\delta$, at an additional multiplicative cost of $O(\log(1/\delta))$. Thus the desired result follows.
\end{proof}

 \subsubsection{POVM-based} Given a subgraph structure $S,$ a two-outcome projective measurement associated with a projector $P_S$ is given by $\{P_S,\, I-P_S\}.$ Then we have the following theorem that provides sample complexity of the two-outcome projective measurement method.
 
\begin{theorem}
Let $\ket{\psi}$ and $p_s$ be the same as Theorem \ref{thm:aae}.
Suppose that one performs the two-outcome projective measurement $\{P_S,I-P_S\}$ independently on $T$ copies of $\ket{\psi}^{\otimes m}$, and let $\hat p := \frac{1}{T}\sum_{j=1}^T X_j$
be the empirical frequency of the outcome corresponding to $P_S$, where $X_j\in\{0,1\}$ denotes the outcome of the $j$-th trial. Then for any $\epsilon,\delta>0$, if
$$T \ge \frac{1}{2\epsilon^2}\ln\frac{2}{\delta},$$
it follows that
$\Pr\big(|\hat p-p_s|\ge \epsilon\big)\le \delta.$
 
Consequently, the sample complexity of estimating $p_s$ up to additive error $\epsilon$ with confidence at least $1-\delta$ is
$O\!\left(\frac{1}{\epsilon^2}\log\frac{1}{\delta}\right).$
\end{theorem}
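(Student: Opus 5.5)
The plan is to reduce the statement directly to Hoeffding's inequality, Proposition \ref{prop:HI}. The first step is to identify the distribution of a single trial. By the Born rule, measuring $\{P_S, I-P_S\}$ on the state $\ket{\psi}^{\otimes m}$ yields the outcome associated with $P_S$ with probability $\bra{\psi}^{\otimes m}P_S\ket{\psi}^{\otimes m} = p_s$. This uses that $P_S$ is a projector, so $P_S^\dagger P_S = P_S$. Hence $X_j$ is a Bernoulli random variable with $\mathbb{E}[X_j]=p_s$.

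The second step is independence. The $T$ trials are carried out on $T$ freshly and independently prepared copies of $\ket{\psi}^{\otimes m}$, for instance each prepared by $A = U_\psi^{\otimes m}$ as in Theorem \ref{thm:aae}. The joint state is therefore the product state $(\ket{\psi}^{\otimes m})^{\otimes T}$, and the measurement is the product measurement. Consequently the joint outcome distribution factorizes, and $X_1,\hdots,X_T$ are i.i.d.\ Bernoulli$(p_s)$ random variables. This is the only point that needs any care: it requires that the copies are not entangled with one another and that no post-selection correlates them.

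Third, apply Proposition \ref{prop:HI} with $K=T$, $p=p_s$ and $\alpha=\epsilon$. This gives $\Pr(|\hat p - p_s|\ge \epsilon)\le 2\exp(-2T\epsilon^2)$. If $T\ge \frac{1}{2\epsilon^2}\ln\frac{2}{\delta}$, then $2T\epsilon^2\ge \ln\frac{2}{\delta}$, so the right-hand side is at most $2\cdot\frac{\delta}{2}=\delta$.

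Finally, the sample complexity follows because the minimal admissible value $T=\lceil \frac{1}{2\epsilon^2}\ln\frac{2}{\delta}\rceil$ is $O(\frac{1}{\epsilon^2}\log\frac1\delta)$, using $\ln\frac{2}{\delta}=O(\log\frac1\delta)$ for $\delta<1$. No step here is a real obstacle; the argument is a routine combination of the Born rule, independence of the trials, and Hoeffding's inequality.
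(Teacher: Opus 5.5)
Your proposal is correct and follows the paper's proof: the Born rule gives $\Pr(X_j=1)=p_s$, the trials are i.i.d.\ Bernoulli, Hoeffding's inequality applies, and solving $2e^{-2T\epsilon^2}\le\delta$ gives the bound on $T$; you also justify independence through the product state and product measurement, which the paper only asserts. One small nit: $\ln\frac{2}{\delta}=O(\log\frac{1}{\delta})$ holds for $\delta$ bounded away from $1$ (say $\delta\le 1/2$), not uniformly for all $\delta<1$, and that is the usual implicit convention behind the final $O$-statement.
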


\begin{proof}
Since $\{P_S,I-P_S\}$ is a valid projective measurement, the Born rule implies that for each trial $j$,
$\Pr(X_j=1) = \bra{\psi}^{\otimes m}P\ket{\psi}^{\otimes m} = p_s.$

Thus, $X_1,\dots,X_T$ are i.i.d.\ Bernoulli random variables with mean $p$. The estimator
$\hat p = \frac{1}{T}\sum_{j=1}^T X_j$
is unbiased, i.e., $\mathbb{E}[\hat p]=p_s$.

Since $X_j\in[0,1]$, Hoeffding's inequality gives
$\Pr\big(|\hat p-p|\ge \epsilon\big)\le 2e^{-2T\epsilon^2}.$
Imposing $2e^{-2T\epsilon^2}\le \delta$ yields
$T \ge \frac{1}{2\epsilon^2}\ln\frac{2}{\delta}.$
This completes the proof.
\end{proof}

To compare the two-outcome POVM  method and the AAE-based approach, a key distinction arises in both qubit complexity and classical state-vector simulation cost. In the POVM, each trial requires preparing $\ket{\psi}^{\otimes m}$, which is a $2m\log_2 N$-qubit state, and repeating the experiment $O(1/\epsilon^2)$ times to achieve additive accuracy $\epsilon$. Thus, while the instantaneous qubit requirement is $O(m\log N)$, the total qubit usage scales as $O\!\left(\frac{m\log N}{\epsilon^2}\right)$.

In contrast, AAE requires coherent access to the state preparation unitary and implements controlled reflections, but achieves a quadratic improvement in precision dependence, requiring only $O(1/\epsilon)$ iterations. The instantaneous qubit requirement remains $O(m\log N)$, but the total qubit usage reduces to $O\!\left(\frac{m\log N}{\epsilon}\right)$.

\section{Numerical Simulations: POVM vs AE}

We evaluate the performance of the proposed subgraph-counting framework using both the POVM-based estimator and an amplitude-estimation (AE) approach. Experiments are conducted on Erdős--Rényi graphs $G(N,p_e)$ with varying graph sizes $N$ and edge probabilities $p_e$. For each $p_e$, we generate multiple $300$ graph instances and compute the exact subgraph counts (triangles, $4$-cycles, and $4$-cliques) to obtain the true success probability $p$ as in Eqn. (\ref{eqn:sp})
where $m$ denotes the number of edges in the subgraph. The estimated count is obtained by scaling $\hat p$ appropriately.

The POVM-based estimator is simulated via repeated two-outcome measurements, modeled as Bernoulli trials with success probability $p$. For a target additive error $\epsilon$ and failure probability $\delta$, the number of measurement shots is set to $T=\lceil (2\epsilon^2)^{-1}\ln(2/\delta)\rceil$. In contrast, amplitude estimation is simulated at the oracle level by exploiting the parameterization $p=\sin^2\theta$ and introducing a perturbation of order $O(1/M)$ in $\theta$, where $M=O(\epsilon^{-1}\log(1/\delta))$ denotes the query complexity. This captures the characteristic quadratic improvement of AE over sampling-based methods without explicitly simulating the full quantum circuit.

We compare the estimators using the normalized root mean squared error (RMSE) in Fig.~\ref{fig:fig_TC_rmse}, Fig.~\ref{fig:4cC_rmse}, Fig.~\ref{fig:4KC_rmse} of the recovered subgraph counts for triangles, $4$-cycles and $4$-cliques respectively, where 
$$\mathrm{Normalized\;RMSE} =\frac{\sqrt{\mathbb{E}\big[(\widehat{X}-X)^2\big]}}{\mathbb{E}[X]},$$
with $X$ denotes the true scaled quantity which is obtained by known classical algorithms in the simulations and $\widehat{X}$ its proposed estimators. Across all experiments, AE consistently achieves lower normalized RMSE than the POVM-based method. This improvement is explained by the difference in error scaling: the POVM estimator exhibits variance $\mathrm{Var}(\hat p)=p(1-p)/T$, leading to an error of order $O(1/\sqrt{T})$, whereas AE achieves $O(1/M)$ error in estimating $p$. Consequently, for comparable resource budgets, AE provides a quadratic advantage in precision.

The performance gap is more pronounced for higher-order motifs such as 4-cycles and 4-cliques, where the underlying success probability $p$ becomes small. In these regimes, the POVM estimator operates in a noise-dominated setting with normalized RMSE approaching $1$, while AE maintains improved accuracy due to its coherent amplification mechanism. These results highlight the advantage of amplitude estimation for rare-event estimation in quantum subgraph counting, while also demonstrating that the POVM-based method provides a simple and scalable baseline with quantum logarithmic space requirements. The values of $(\epsilon,\delta)$  control the amplification of estimation errors. Since the subgraph count is obtained by scaling $\hat p$ by a factor of $(2E)^m$, even small deviations between $\hat p$ and the true $p$ can lead to significant errors in the estimated count, thereby increasing the normalized RMSE.

\begin{figure}[htbp]
    \centering
    {{\includegraphics[width=0.45\textwidth]{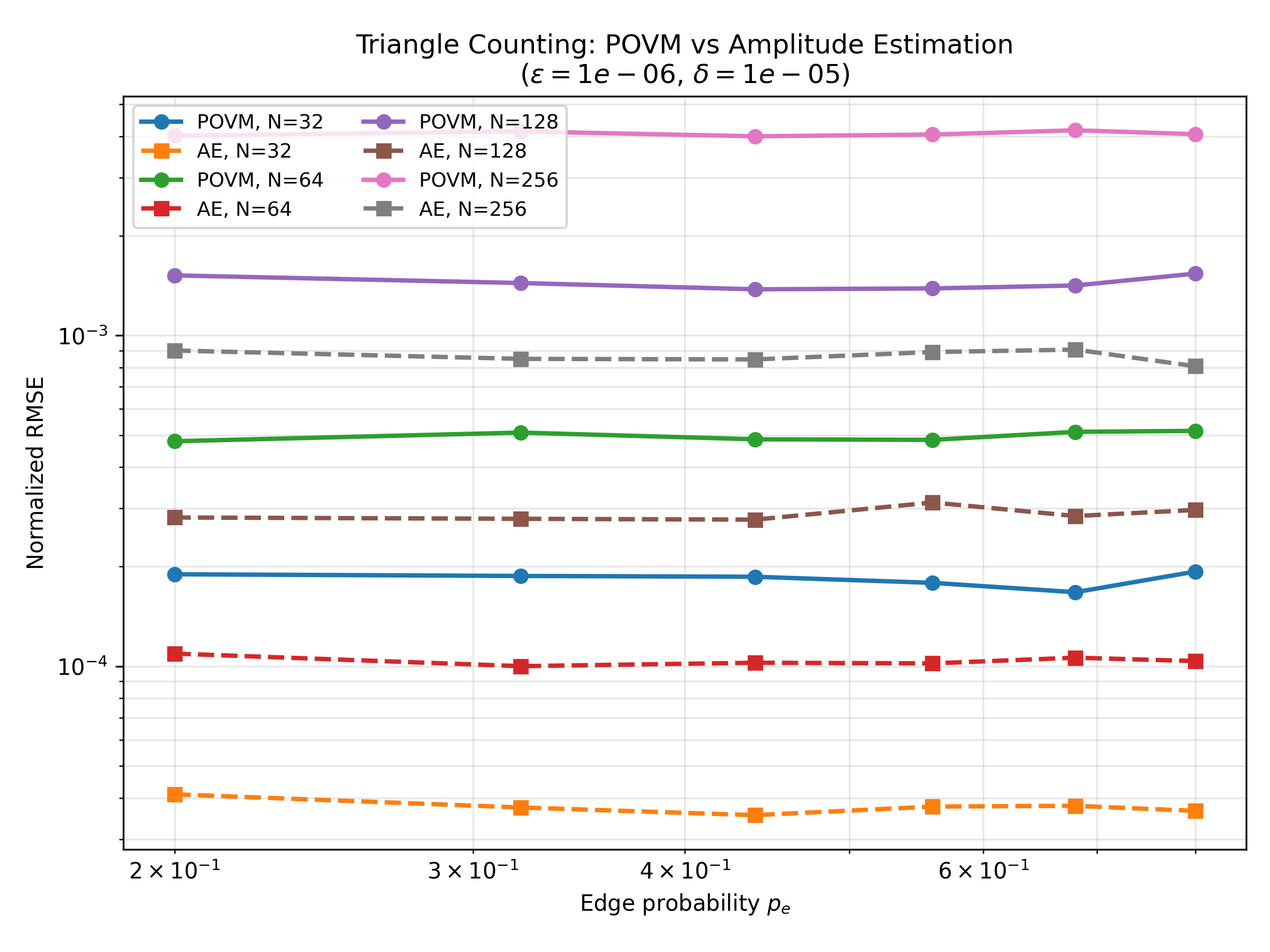}}}
    \caption{Normalized RMSE for estimating triangle counts in ER graphs on $N$ vertices }
    \label{fig:fig_TC_rmse}
\end{figure}

\begin{figure}[htbp]
    \centering
    {{\includegraphics[width=0.45\textwidth]{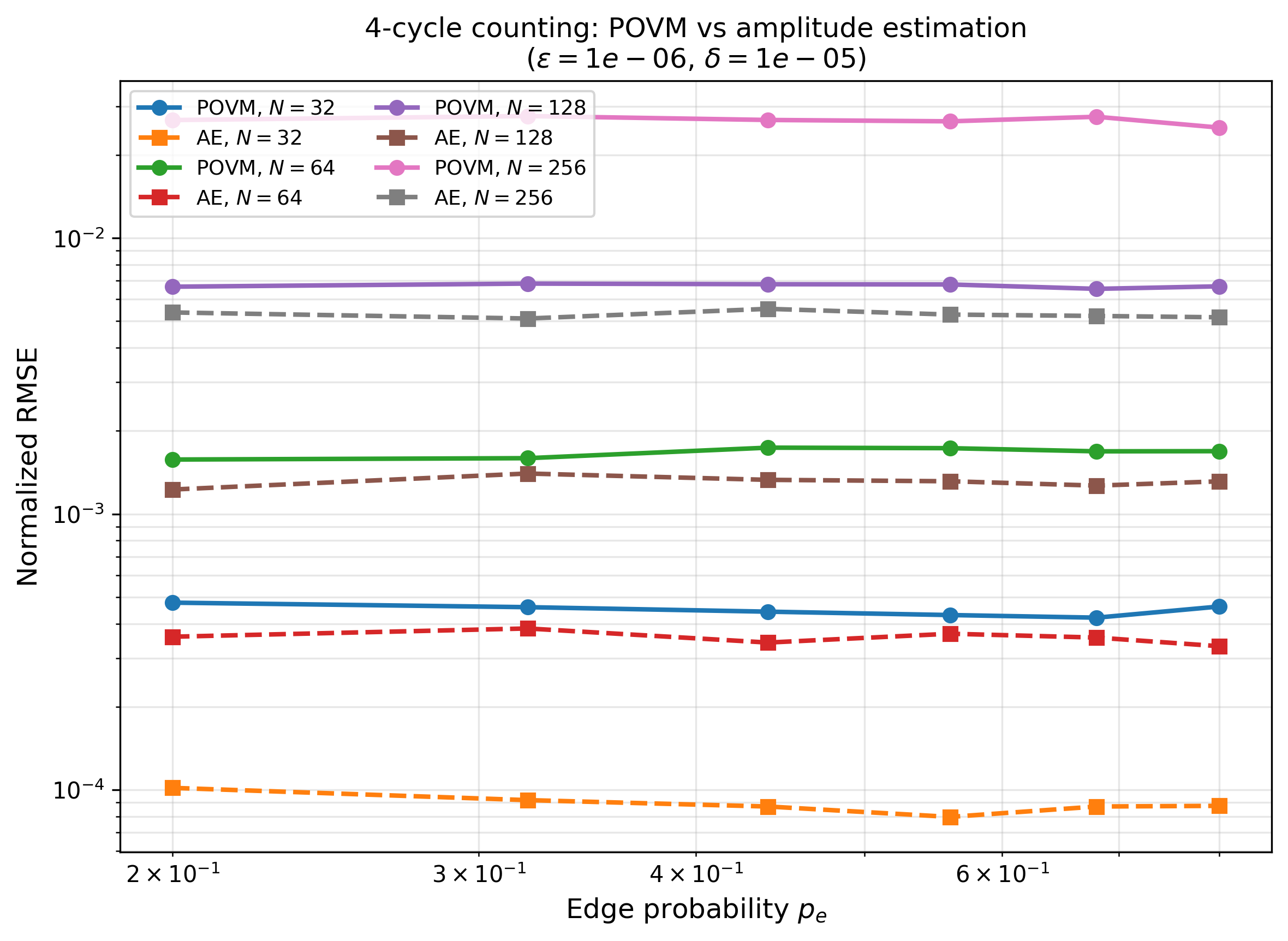}}}
    \caption{Normalized RMSE for estimating $4$-cycle counts in ER graphs on $N$ vertices}
    \label{fig:4cC_rmse}
\end{figure}

\begin{figure}[htbp]
    \centering
    {{\includegraphics[width=0.45\textwidth]{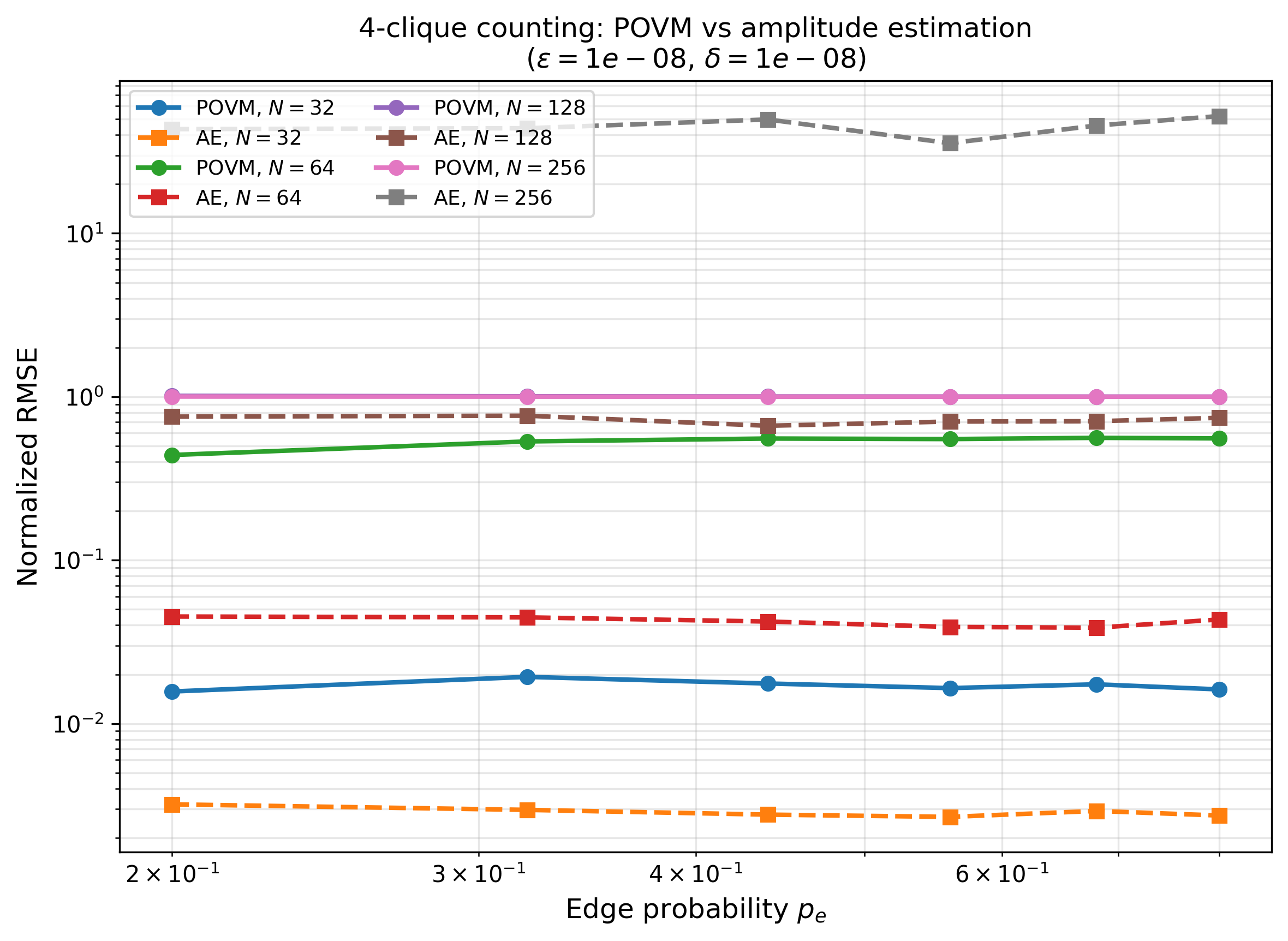}}}
    \caption{Normalized RMSE for estimating $4$-clique counts in ER graphs on $N$ vertices}
    \label{fig:4KC_rmse}
\end{figure}

\section{Conclusion} 

In this work, we develop quantum circuit models for preparing a graph adjacency state that encodes the adjacency relations of a graph as a normalized vectorization of its adjacency matrix. The construction applies to both directed and undirected graphs through appropriate normalization, providing a unified framework for quantum graph embedding. We leverage this state to estimate subgraph frequencies, including cliques and cycles, by performing measurements on multiple copies of the adjacency state, where the number of copies scales with the number of edges in the subgraph. The resulting algorithms for motif counting operate in quantum logspace, and no such algorithms are known in classical computation. An important direction for future work is to carry out circuit simulations of the proposed algorithms and assess their performance in the presence of noise on quantum hardware. \\

\noin{\bf Acknowledgment.} The author thanks Hannes Leipold, Sarvagya Upadhyay, Hirotaka Oshima, and
Yasuhiro Endo for their insightful comments and discussion. 

\bibliographystyle{plain}
\bibliography{reff}
\end{document}